\author
{
  Marc van Kreveld\thanks
  {
    Department of Information and Computing Sciences, Universiteit Utrecht,
    The Netherlands,
    \texttt{\{m.j.vankreveld|m.loffler|f.staals\}@uu.nl}
  }
  \and Maarten L\"offler\footnotemark[1]
  \and Frank Staals\footnotemark[1]
}
\title{Central Trajectories}
\newcommand{\remark}[3]{\textcolor{blue}{\textsc{#1 #2:}} \textcolor{red}{\textsf{#3}}}
\renewcommand{\remark}[3]{}
\newcommand{\frank}[2][says]{\remark{Frank}{#1}{#2}}
\newcommand{\maarten}[2][says]{\remark{Maarten}{#1}{#2}}
\newtheorem{theorem} {Theorem}
\newtheorem{lemma}[theorem]{Lemma}
\newtheorem{corollary}[theorem] {Corollary}
\newtheorem{observation}[theorem] {Observation}
\renewenvironment{proof}{\noindent\textit{Proof.}}{\hfill$\qed$\smallskip}
\newcommand{\mkmcal}[1]{\ensuremath{\mathcal{#1}}\xspace}
\newcommand{\G}{\mkmcal{G}}
\newcommand{\F}{\mkmcal{F}}
\newcommand{\C}{\mkmcal{C}}
\newcommand{\X}{\mkmcal{X}}
\newcommand{\D}{\mkmcal{D}}
\newcommand{\A}{\mkmcal{A}}
\newcommand{\RG}{\mkmcal{R}}
\newcommand{\mkmbb}[1]{\ensuremath{\mathbb{#1}}\xspace}
\newcommand{\R}{\mkmbb{R}}
\newcommand{\eps}{\varepsilon}
\newcommand{\etal}{et al.\xspace}
\newcommand{\dd}[1]{\ensuremath{\,\mathrm{d}#1}}
\DeclareFontFamily{OT1}{pzc}{}
\DeclareFontShape{OT1}{pzc}{m}{it}{<-> s * [1] pzcmi7t}{}
\DeclareMathAlphabet{\mathpzc}{OT1}{pzc}{m}{it}
\newcommand{\mkpzc}[1]{\ensuremath{\mathpzc{#1}}\xspace}
\renewcommand{\c}{\mkpzc{C}}
\renewcommand{\t}{\mkpzc{T}}
\newcommand{\U}{\mkpzc{U}}
\renewcommand{\L}{\mkpzc{L}}
\newcommand{\I}{\mkpzc{I}}
\newcommand{\trajectoid}{trajectoid\xspace}
\renewcommand*{\@fnsymbol}[1]{\ensuremath{\ifcase#1\or *\or \mathsection\or \mathparagraph\or
   \dagger\or \ddagger\or \|\or **\or \dagger\dagger
   \or \ddagger\ddagger \else\@ctrerr\fi}}
\titleformat{\paragraph}[runin]{\bfseries}{\theparagraph}{0}{}[.]
\titlespacing{\paragraph}{%
  0pt}{
  0.5\baselineskip}{
  1em}
\begin{document}
\maketitle

\begin{abstract}
  An important task in trajectory analysis is clustering. The results of a
  clustering are often summarized by a single representative trajectory and
  an associated size of each
  cluster. We study the problem of computing a suitable representative of a
  set of similar trajectories. To this
  end we define a \emph{central trajectory} \c, which consists of pieces of the
  input trajectories, switches from one entity to another only if they are
  within a small distance of each other, and such that at any time $t$, the
  point $\c(t)$ is as central as possible. We measure centrality in terms of
  the radius of the smallest disk centered at $\c(t)$ enclosing all entities at
  time $t$, and discuss how the techniques can be adapted to other measures of
  centrality. We first study the problem in $\R^1$, where we show that an
  optimal central trajectory \c representing $n$ trajectories, each consisting
  of $\tau$ edges, has complexity $\Theta(\tau n^2)$ and can be computed in
  $O(\tau n^2 \log n)$ time. We then consider trajectories in $\R^d$ with $d\geq 2$,
  and show that the complexity of \c is at most $O(\tau n^{5/2})$ and can be computed in
  $O(\tau n^3)$ time.
\end{abstract}


\thispagestyle{empty}
\clearpage
\setcounter{page}{1}

\section{Introduction}
\label{sec:Introduction}

A \emph {trajectory} is a sequence of time-stamped locations in the plane, or more generally in $\R^d$.
Trajectory data is obtained by tracking the movements of e.g. animals \cite{BovetB88,Calenge200934,gal-nmibc-09}, hurricanes \cite{Stohl1998947}, traffic \cite{lltx-dftf-10}, or other moving entities \cite{dwf-rpm-09} over time.
Large amounts of such data have recently been collected in a variety of research fields.
As a result, there is a great demand for tools and
techniques to analyze trajectory data.

One important task in trajectory analysis is \emph {clustering}: subdividing a large collection of trajectories into groups of ``similar'' ones. This problem has been studied
extensively, and many different techniques are available~\cite{bbgll-dcpcs-11,grsc-pcecu-07,gs-tcmrm-99,lhw-tc-07,vgk-dsmt-02}.
Once a suitable clustering has been determined, the result needs to be stored
or prepared for further processing. Storing the whole collection of
trajectories in each cluster is often not feasible, because follow-up analysis
tasks may be computation-intensive. Instead, we wish to represent each cluster
by a signature: the number of trajectories in the cluster, together
with a \emph {representative} trajectory which should capture the defining
features of all trajectories in the cluster.

Representative trajectories are also useful for visualization
purposes. Displaying large amounts of trajectories often leads to visual
clutter. Instead, if we show only a number of representative trajectories, this
reduces the visual clutter, and allows for more effective data
exploration. The original trajectories can still be shown if desired,
using the details-on-demand principle in information visualization~\cite{Shneiderman96}.

\paragraph {Representative trajectories}
When choosing a representative trajectory for a group of similar trajectories,
the first obvious choice would be to pick one of the trajectories in the
group. However, one can argue that no single element in a group may be a
good representative, e.g. because each individual trajectory has some prominent
feature that is not shared by the rest (see Fig.~\ref {fig:single_bad}), or no
trajectory is sufficiently in the middle all the time. On the
other hand, it is desirable to output a trajectory that does consist of \emph
{pieces} of input trajectories, because otherwise the representative trajectory
may display behaviour that is not present in the input, e.g. because of
contextual information that is not available to the algorithm (see Fig.~\ref
{fig:lake}).

\tweeplaatjes {single_bad} {lake}
{ (a) Every trajectory has a peculiarity that is not representative for the set.
  (b) Taking, for example, the pointwise average of a set of trajectories may result in one that ignores context.
}

To determine what a good representative trajectory of a group of similar trajectories is, we identify two main categories: \emph {time-dependent} and \emph {time-independent} representatives.
Trajectories are typically collected as a discrete sequence of time-stamped
locations.  By linearly interpolating the locations we obtain a continuous
piecewise-linear curve as the image of the function. Depending on the application, we may be interested in the curve with attached time stamps (say, when studying a flock of animals that moved together) or in just the curve (say, when considering hikers that took the same route, but possibly at different times and speeds).

When time is not important, one can select a representative based directly on the geometry or
topology of the set of curves~\cite{bbklsww-mt-12,hpr-fdre-11}.  When time is
important, we would like to have the property that at each time $t$ our
representative point $c(t)$ is a good representative of the set of points
$P(t)$.  To this end, we may choose any static representative point of a point
set, for which many examples are available: the Fermat-Weber point (which
minimizes the sum of distances to the points in $P$), the center of mass (which
minimizes the sum of squared distances), or the center of the smallest
enclosing circle (which minimizes the distance to the farthest point in $P$).

\paragraph {Central trajectories}

In this work, we focus on time-dependent measures based on static concepts of centrality. We choose the distance to the farthest point, but discuss in Section~\ref {sec:Extensions} how our results can be adapted to other measures.

Ideally, we would output a trajectory \c such that at any time $t$, $\c(t)$ is
the point (entity) that is closest to its farthest entity. Unfortunately, when
the entities move in $\R^d$ for $d > 1$, this may cause discontinuities. Such
discontinuities are unavoidable: if we insist that the output trajectory
consists of pieces of input trajectories \emph {and} is continuous, then in
general, there will be no opportunities to switch from one trajectory to
another, and we are effectively choosing one of the input trajectories
again. At the same time, we do not want to output a trajectory with arbitrarily
large discontinuities. An acceptable compromise is to allow discontinuities, or
\emph{jumps}, but only over small distances, controlled by a parameter
$\eps$. We note that this problem of discontinuities shows up for
time-independent representatives for entities moving in $\R^d$, with $d \geq
3$, as well, because the traversed curves generally do not intersect.

\paragraph{Related work}
Buchin \etal~\cite{bbklsww-mt-12} consider the problem of computing a \emph {median} trajectory for a set of trajectories without time information. Their method produces a trajectory that consists of pieces of the input.
Agarwal~\etal~\cite{agarwal2005staying} consider trajectories with time
information and compute a representative trajectory that follows the median (in $\R^1$) or a point of high \emph {depth} (in $\R^2$) of the input entities.
The resulting trajectory does not necessarily stay close to the input trajectories.
They give exact and approximate algorithms.
%
Durocher and Kirkpatrick~\cite{durocher2009projection} observe that a trajectory minimizing the sum of distances to the other entities is \emph {unstable}, in the sense that arbitrarily small movement of the entities may cause an arbitrarily large movement in the location of the representative entity.
They proceed to consider alternative measures of centrality, and define the \emph {projection median}, which they prove is more stable.
Basu~\etal~\cite{basu2012projection} extend this concept to higher dimensions.

\paragraph{Problem description} We are given a set \X of $n$ entities, each
moving along a piecewise linear trajectory in $\R^d$ consisting of $\tau$
edges. We assume that all trajectories have their vertices at the same times,
i.e.~times $t_0,..,t_\tau$.  Fig.~\ref {fig:intro_example_standardandtime}
shows an example.
%
\tweeplaatjes [scale=1.18] {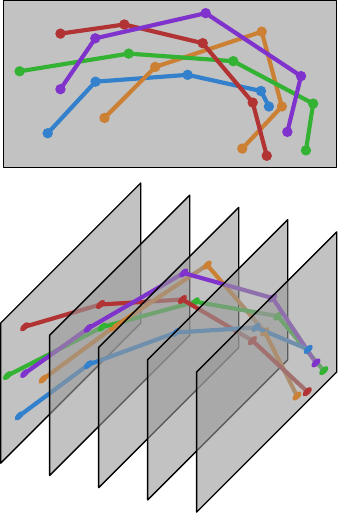}
{intro_example_slabsanddistances} { (a) Two views of five moving entities and
  their trajectories. (b) On the top the pairwise distances between the
  entities as a function over time. On the bottom the functions $D_\sigma$, and
  in yellow the area representing $\D(\c)$. }
For an entity $\sigma$, let $\sigma(t)$ denote the position of $\sigma$ at
time~$t$. With slight abuse of notation we will write $\sigma$ for both entity
$\sigma$ and its trajectory. At a given time $t$, we denote the distance from
$\sigma$ to the entity farthest away from $\sigma$ by $D_\sigma(t) =
D(\sigma,t) = \max_{\psi \in \X} \|\sigma(t)\psi(t)\|$, where $\|pq\|$ denotes
the Euclidean distance between points $p$ and $q$ in $\R^d$.
Fig.~\ref{fig:intro_example_slabsanddistances} illustrates the pairwise
distances and resulting $D$ functions for five example trajectories.  For ease
of exposition, we assume that the trajectories are in general position: that is, no
three trajectories intersect in the same point, and no two pairs of entities
are at distance $\eps$ from each other at the same time.

A \emph{\trajectoid} is a function that maps time to the set of entities \X,
with the restriction that at discontinuities the distance between the entities
involved is at most $\eps$. Intuitively, a \trajectoid corresponds to a
concatenation of pieces of the input trajectories in such a way that two
consecutive pieces match up in time, and the end point of the former piece is
within distance $\eps$ from the start point of the latter piece. In Fig.~\ref
{fig:intro_example_slabsanddistances}, a \trajectoid may switch between a pair
of entities when their pairwise distance function lies in the bottom strip of
height $\eps$. More formally, for a \trajectoid \t we have that

\begin{itemize}[nosep]
\item at any time $t$, $\t(t) = \sigma$ for some $\sigma \in \X$, and
\item at every time $t$ where \t has a discontinuity, that is, \t \emph{jumps}
  from entity $\sigma$ to entity $\psi$, we have that $\|\sigma(t)\psi(t)\|
  \leq \eps$.
\end{itemize}

Note that this definition still allows for a series of jumps within an
arbitrarily short time interval $[t,t+\delta]$, essentially simulating a jump
over distances larger than $\eps$. To make the formulation cleaner, we slightly
weaken the second condition, and allow a trajectoid to have discontinuities
with a distance larger than $\eps$, provided that such a large jump can be
realized by a sequence of small jumps, each of distance at most $\eps$. When it
is clear from the context, we will write $\t(t)$ instead of $\t(t)(t)$ to mean
the location of entity $\t(t)$ at time $t$. We now wish to compute a trajectoid
\c that minimizes the function
  \[ \D(\t) = \int_{t_0}^{t_\tau} D(\t,t) \dd t. \]
\noindent
So, at any time $t$, all entities lie in a disk of radius $D(\c,t)$ centered at
$\c(t)$.

\paragraph{Outline and results} We first study the situation where entities
move in $\R^1$. In Section~\ref{sec:oned} we show that the worst case
complexity of a central trajectory in $\R^1$ is $\Theta(\tau n^2)$, and that we
can compute one in $O(\tau n^2 \log n)$ time. We then extend our approach to
entities moving in $\R^d$, for any constant $d$, in Section~\ref
{sec:higher_dimensions}. For this case, we prove that the maximal complexity of
a central trajectory \c is $O(\tau n^{5/2})$. Computing \c takes $O(\tau n^3)$
time and requires $O(\tau n^2 \log n)$ working space. We briefly discuss various
extensions to our approach in Section~\ref{sec:Extensions}. Omitted proofs can
be found in Appendix~\ref{app:Omitted_Proofs}.

Even though we do not expect this to happen in practice, the worst case
complexity of our central trajectories can be significantly higher than the input size.
If this occurs, we can use traditional line simplification algorithms like Imai and
Iri~\cite{imai1998computational} to simplify the resulting central
trajectory. This gives us a representative that still is always close
---for instance within distance $2\eps$--- to one of the input trajectories.
Alternatively, we can use dynamic-programming combined with our methods
to enforce the output trajectory to have at most $k$ vertices, for any $k$,
and always be on the input trajectories.
Computing such a central trajectory is more expensive
than our current algorithms, however. Furthermore, enforcing a low output
complexity may not be necessary. For example, in applications like
visualization, the number of trajectories shown often has a larger impact
visual clutter than the length or complexity of the individual trajectories. It
may be easier to follow a single trajectory that has many vertices than
to follow many trajectories that have fewer vertices each.


\section{Entities moving in $\R^1$}
\label{sec:oned}

\tweeplaatjes [scale=.8] {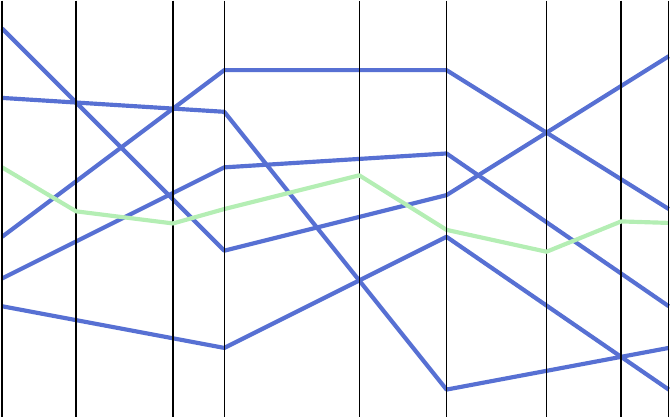} {1d_slabs_straight} {(a) A set of trajectories and the
  ideal trajectory \I. The breakpoints in the ideal trajectory partition time
  into $O(n\tau)$ intervals. (b) The trajectories after transforming \I to a
  horizontal line.}

Let \X be the set of entities moving in $\R^1$. The trajectories of these
entities can be seen as polylines in $\R^2$: we associate time with the
horizontal axis, and $\R^1$ with the vertical axis (see
Fig.~\ref{fig:1d_slabs}). We observe that the distance between two points $p$
and $q$ in $\R^1$ is simply their absolute difference, that is, $\|pq\|=|p-q|$.


Let \I be the \emph{ideal} trajectory, that is, the trajectory that minimizes
\D but is not restricted to lie on the input trajectories. It follows that at
any time $t$, $\I(t)$ is simply the average of the highest entity $\U(t)$ and
the lowest entity $\L(t)$. We further subdivide each time interval
$J_i=[t_i,t_{i+1}]$ into \emph{elementary intervals}, such that \I is a single
line segment inside each elementary interval.

\begin{lemma}
  \label{lem:num_elementary_intervals}
  The total number of elementary intervals is $\tau(n+2)$.
\end{lemma}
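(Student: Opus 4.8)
The plan is to count elementary intervals by tracking two separate sources of breakpoints in the ideal trajectory $\I$: the breakpoints inherited from the input trajectories, and the additional breakpoints created when the identity of the highest or lowest entity changes within a single time interval $J_i$.

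First I would recall that $\I(t) = (\U(t) + \L(t))/2$, where $\U(t)$ is the highest entity and $\L(t)$ is the lowest entity at time $t$. Since each input trajectory is a polyline that is linear on each of the $\tau$ time intervals $J_i = [t_i, t_{i+1}]$, the functions $\U$ and $\L$ are piecewise linear, and $\I$ is piecewise linear as well. The key observation is that $\I$ fails to be a single segment on $J_i$ only at times where either $\U$ or $\L$ switches from one entity to another, i.e.\ where the upper envelope or the lower envelope of the linear pieces changes its supporting line. So an elementary interval boundary strictly inside some $J_i$ corresponds to a vertex of the upper envelope or a vertex of the lower envelope of the $n$ segments over that interval.

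Next I would bound these envelope vertices. On a single interval $J_i$, the $n$ trajectory pieces are line segments, so the upper envelope $\U$ restricted to $J_i$ is the upper envelope of $n$ segments that all span the full interval; since any two distinct lines cross at most once, the upper envelope of $n$ lines over an interval has at most $n-1$ internal breakpoints, and likewise for the lower envelope. That gives at most $2(n-1)$ internal breakpoints per interval $J_i$, hence at most $(n-1)$ from $\U$ and $(n-1)$ from $\L$; I would then add the two interval endpoints $t_i, t_{i+1}$, which are already breakpoints of $\I$ by construction. Summing over the $\tau$ intervals and being careful not to double-count the shared endpoints between consecutive intervals should yield exactly $\tau(n+2)$ elementary intervals. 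Concretely, each $J_i$ contributes one elementary interval plus one for each internal breakpoint, and with $2(n-1)$ internal breakpoints this is $2n-1$ elementary intervals per $J_i$; this does not immediately match $\tau(n+2) = \tau n + 2\tau$, so the precise accounting must be reconciled with the general-position assumptions and the exact definition of what counts as a breakpoint of $\I$.

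The main obstacle, and the step I would scrutinize most carefully, is exactly this final reconciliation of constants. The general-position assumption (no three trajectories meet at a point) ensures each envelope breakpoint is caused by a single crossing, so breakpoints of $\U$ and $\L$ are distinct and simple; but whether the intended bound counts breakpoints coming from $\U$ and $\L$ together as at most $n$ per interval rather than $2(n-1)$ depends on a sharper structural fact — for instance, that within one interval the highest and lowest entity together change identity at most $n$ times, because the total number of times an entity can enter or leave the ``extreme'' roles is limited by how the sorted order of the $n$ values evolves. I would therefore verify whether the $\tau(n+2)$ figure follows from bounding the combined complexity of the upper and lower envelopes by $n$ (plus a small constant for endpoints) per interval, and confirm the additive $+2$ accounts for the two endpoints $t_0$ and $t_\tau$ of the whole time domain rather than per-interval endpoints. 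Getting this constant exactly right is the crux; the piecewise-linear and envelope structure is otherwise routine.
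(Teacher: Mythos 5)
Your setup matches the paper's: breakpoints of \I inside an interval $J_i$ are exactly the breakpoints of $\U$ and $\L$, which are upper and lower envelopes over $J_i$. But your actual count --- at most $n-1$ internal breakpoints for each envelope separately, hence $2(n-1)$ per interval --- only yields roughly $\tau(2n-1)$ elementary intervals, and you yourself concede that this ``does not immediately match $\tau(n+2)$.'' That concession marks a genuine gap: the lemma claims the sharper combined bound, and your proof never establishes it. The missing idea, which is the entire content of the paper's proof, is point-line duality. Within $J_i$ the entities move along \emph{lines} (not arbitrary segments), so each line contributes at most one connected piece to the upper envelope and at most one to the lower envelope; dually, the pieces of $\U$ correspond to vertices of the upper hull of the $n$ dual points and the pieces of $\L$ to vertices of the lower hull. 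Since a point is a vertex of at most one of the two hulls --- except the two $x$-extreme points, which lie on both (these dualize to the lines of extreme slope) --- the summed complexity of $\U$ and $\L$ is at most $n+2$ per interval, not $2n$. Equivalently, in primal terms: a line that appears strictly in the interior of the upper envelope lies above all other lines at some time and, being a full line over the interval, cannot also appear on the lower envelope; your separate per-envelope bounds double-count precisely because they ignore this exclusion.

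Your final paragraph actually conjectures the right fact (``the highest and lowest entity together change identity at most $n$ times'') and correctly suspects that the $+2$ comes from a small constant per interval rather than from $t_0$ and $t_\tau$, but you explicitly defer its verification, and without it the stated bound is not proved. To repair the argument, replace the two independent envelope counts by the duality step above (or by the primal exclusion argument); the rest of your reduction --- piecewise linearity of $\I$, breakpoints of $\I$ coming only from envelope vertices, summing over the $\tau$ intervals --- then goes through as in the paper. Note also that asymptotically your weaker count still gives $O(\tau n)$, so everything downstream in the paper that uses only the order of magnitude would survive; it is specifically the constant in Lemma~\ref{lem:num_elementary_intervals} that your proposal fails to deliver.
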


\begin{proof}
  The ideal trajectory \I changes direction when $\U(t)$ or $\L(t)$
  changes. During a single interval $[t_i,t_{i+1}]$ all entities move along lines, so \U
  and \L are the upper and lower envelope of a set of $n$ lines. So by standard
  point-line duality, \U and \L correspond to the upper and lower hull of $n$
  points. The summed complexity of the upper and lower hull is at most $n+2$.
\end{proof}

We assume without loss of generality that within each elementary interval \I
coincides with the $x$-axis. To simplify the description of the proofs and
algorithms, we also assume that the entities never move parallel to the ideal
trajectory, that is, there are no horizontal edges.

\begin{lemma}
  \label{lem:central_ideal}
  \c is a central trajectory in $\R^1$ if and only if it minimizes the function
  \[ \D'(\t) = \int_{t_0}^{t_\tau} |\t(t)| \dd t. \]
\end{lemma}

\begin{proof}
  A central trajectory \c is a \trajectoid that minimizes the function

  \begin{align*}
    \D(\t) &= \int_{t_0}^{t_\tau} D(\t,t) \dd t
            = \int_{t_0}^{t_\tau} \max_{\psi \in \X} \|\t(t)\psi(t)\|  \dd t
            = \int_{t_0}^{t_\tau} \max_{\psi \in \X} |\t(t) - \psi(t)| \dd t \\
           &= \int_{t_0}^{t_\tau} \max \{ |\t(t) - \U(t)|, |\t(t) - \L(t)| \} \dd t.
  \end{align*}

  Since $(\U(t) + \L(t))/2 = 0$, we have that $|\t(t) - \U(t)| > |\t(t) - \L(t)|$
  if and only if $\t(t) < 0$. So, we split the integral, depending on $\t(t)$,
  giving us

 \begin{align*}
  \D(\t) &= \int_{t_0 \leq t \leq t_\tau \land \t(t) \geq 0} \t(t) - \L(t) \dd t +
            \int_{t_0 \leq t \leq t_\tau \land \t(t) <    0} \U(t) - \t(t) \dd t \\
  &=     \int_{t_0 \leq t \leq t_\tau \land \t(t) \geq 0}  \t(t) \dd t -
         \int_{t_0 \leq t \leq t_\tau \land \t(t) \geq 0} \L(t) \dd t~+ \\
  &\quad \int_{t_0 \leq t \leq t_\tau \land \t(t) <    0} \U(t)  \dd t -
         \int_{t_0 \leq t \leq t_\tau \land \t(t) <    0} \t(t)  \dd t.
 \end{align*}

 We now use that $-\int_{\t(t) < 0}\t(t) = \int_{\t(t) < 0} |\t(t)|$, and that
 $-\int \L(t) = \int \U(t)$ (since $(\U(t) + \L(t))/2 = 0$). After rearranging the
 terms we then obtain

 \begin{align*}
  \D(\t) &=     \int_{t_0 \leq t \leq t_\tau \land \t(t) \geq 0}  \t(t)  \dd t +
                \int_{t_0 \leq t \leq t_\tau \land \t(t) <    0} |\t(t)| \dd t~+  \\
         &\quad \int_{t_0 \leq t \leq t_\tau \land \t(t) \geq 0} \U(t)   \dd t +
                \int_{t_0 \leq t \leq t_\tau \land \t(t) <    0} \U(t)   \dd t \\
         &=     \int_{t_0 \leq t \leq t_\tau} |\t(t)|  \dd t~+
                \int_{t_0 \leq t \leq t_\tau} \U(t)   \dd t.
  \end{align*}

  The last term is independent of $\t$, so we have $\D(\t) = \D'(\t) + c$,
  for some $c \in \R$. The lemma follows.
\end{proof}

By Lemma~\ref{lem:central_ideal} a central trajectory \c is a \trajectoid that
minimizes the area $\D'(\t)$ between \t and the ideal trajectory \I. Hence, we
can focus on finding a trajectoid that minimizes $\D'$.

\subsection{Complexity}
\label{sub:complexity_1d}

\eenplaatje[scale=.8]{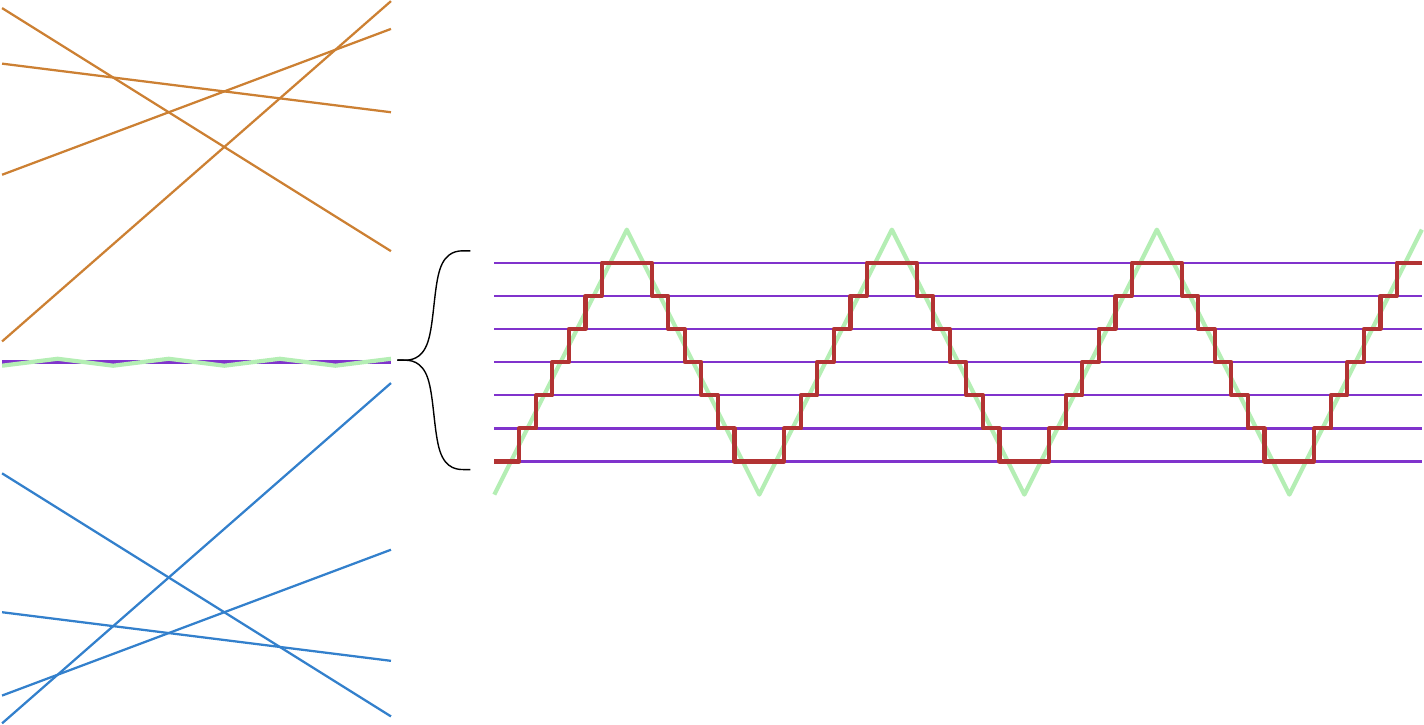}{Lower bound construction that shows that
  \C (red) may have quadratic complexity. The ideal trajectory \I is shown in green.}

\begin{lemma}
  \label{lem:lowerbound_complexity_1D}
  For a set of $n$ trajectories in $\R^1$, each with vertices at times
  $t_0,..,t_\tau$, a central trajectory \c may have worst case complexity
  $\Omega(\tau n^2)$.
\end{lemma}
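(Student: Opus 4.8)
The plan is to construct an explicit family of $n$ trajectories on $\tau$ edges for which any optimal \trajectoid \c is forced to make $\Omega(n)$ distinct pieces within \emph{each} of the $\tau$ time intervals $J_i=[t_i,t_{i+1}]$, giving $\Omega(\tau n^2)$ total complexity. Since by Lemma~\ref{lem:central_ideal} we only need to minimize $\D'(\t)=\int |\t(t)|\dd t$, i.e. the area between \t and the ideal trajectory \I, I would work entirely in the transformed picture where \I is the $x$-axis (Fig.~\ref{fig:1d_slabs_straight}): the goal becomes to keep \t as close to zero as possible, switching between entities only when two of them are within $\eps$ of each other.

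First I would focus on a single interval $J_i$ and design $\Theta(n)$ trajectory edges (lines) whose arrangement forces $\Omega(n)$ switches. The idea is to place the entities so that, reading left to right in time, the entity closest to the $x$-axis changes $\Omega(n)$ times, and each such handover happens at a moment when the outgoing and incoming entities are within $\eps$ of each other (so the jump is legal) while no cheaper alternative exists. Concretely I would use a pencil of lines crossing near the axis in a staggered pattern, so that the lower envelope of $|{\cdot}|$ over the entities is a ``sawtooth'' realized by $\Omega(n)$ different entities in turn; because the optimal \c must track this envelope up to the $\eps$-switching constraint, it is driven to adopt $\Omega(n)$ pieces. The figure referenced (\texttt{quadratic\_ideal}) already suggests exactly this staircase behaviour of \C in red against \I in green.

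Next I would replicate this gadget across all $\tau$ intervals. The subtlety is that the construction must be \emph{self-similar and independent} per interval: at each vertex time $t_i$ I reset the configuration (re-using the general-position assumptions, no two pairs at distance $\eps$ simultaneously) so that the $\Omega(n)$ switches forced in $J_i$ cannot be amortized away against neighbouring intervals. A clean way is to arrange that at each $t_i$ all entities pass through a tight cluster (so any leftover ``debt'' from the previous interval can be paid off with legal $\eps$-jumps), and then have them fan out again inside $J_{i+1}$ to force a fresh batch of $\Omega(n)$ switches. Summing over the $\tau$ intervals yields $\Omega(\tau n^2)$ pieces.

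The main obstacle I expect is the \emph{lower-bound argument} proper: it is not enough to exhibit a configuration where \emph{some} cheap trajectoid has many pieces; I must show that \emph{every} optimal \c is forced to have $\Omega(\tau n^2)$ complexity. This requires an exchange/charging argument proving that any trajectoid using fewer switches incurs strictly larger area $\D'$ than the sawtooth-following one, so no optimal solution can shortcut the staircase. I would establish this by a local exchange argument: if an optimal \c skips one of the forced handovers, it must either stay on a more distant entity (paying $\Omega(\text{gap})$ extra area over a positive-length subinterval) or perform an illegal jump (distance $>\eps$ at that time), contradicting optimality or the trajectoid constraint respectively. Making the per-interval geometry so that both escape routes are blocked — while keeping everything in general position and on $\tau$ edges — is the delicate part of the proof.
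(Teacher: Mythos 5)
There is a genuine gap, and it is quantitative: your gadget forces only $\Omega(n)$ pieces per interval $J_i$, and $\tau$ intervals times $\Omega(n)$ switches is $\Omega(\tau n)$, not the claimed $\Omega(\tau n^2)$. A sawtooth in which the entity nearest to \I changes hands $\Omega(n)$ times, with one legal handover per tooth, can never beat linear complexity inside a single interval, so your ``summing over the $\tau$ intervals yields $\Omega(\tau n^2)$ pieces'' does not follow from what precedes it. To get a quadratic count per interval you need \emph{two} multiplying factors, and this is exactly what the paper's construction supplies: it splits the $n$ entities into three groups of $m=n/3$ lines. Two groups shape the upper and lower envelopes so that the ideal trajectory \I (their average) makes a vertical zigzag with $\Omega(n)$ oscillations of large amplitude; the third group consists of $m$ horizontal lines stacked at vertical spacing at most $\eps$, all crossing \I. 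Since \c must hug \I but can only move between entities by $\eps$-jumps, each oscillation of \I forces \c to climb up and then back down the entire ladder of horizontal lines, i.e.\ $\Omega(n)$ jumps \emph{per zigzag}, for $\Omega(n^2)$ jumps within one interval where every entity moves along a single line segment. Repeating this over $\tau$ intervals gives $\Omega(\tau n^2)$.

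Your proposal contains several sound ingredients --- working in the transformed picture where \I is the $x$-axis, resetting the configuration at each $t_i$ via a tight cluster so the per-interval counts cannot be amortized away, and the observation that a lower bound must constrain \emph{every} optimal \c, not just exhibit one expensive \trajectoid (the paper is in fact rather terse on this last point, so your exchange-argument instinct is a legitimate strengthening). But the missing idea is the ladder: the discontinuity budget $\eps$ is what converts each of the $\Omega(n)$ oscillations of \I into $\Omega(n)$ separate vertices of \c, because a jump can cover only distance $\eps$ while the oscillation amplitude spans $\Omega(n)$ rungs. Without some mechanism of this kind --- many forced vertices per single handover event, or many envelope changes per entity --- no arrangement of $n$ lines in one interval will force more than $O(n)$ pieces by envelope-tracking alone, and the lemma's bound is out of reach.
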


\begin{proof}
  We describe a construction for the entities that shows that within a single
  time interval $J=[t_i,t_{i+1}]$ the complexity of \c may be
  $\Omega(n^2)$. Repeating this construction $\tau$ times gives us $\Omega(\tau
  n^2)$ as desired.

  Within $J$ the entities move linearly. So we construct an arrangement \A of
  lines that describes the motion of all entities. We place $m=n/3$ lines such
  that the upper envelope of \A has linear complexity. We do the same for the
  lower envelope. We position these lines such that the ideal trajectory \I
  ---which is the average of the upper and lower envelope--- makes a vertical
  ``zigzagging'' pattern (see
  Fig.~\ref{fig:quadratic_ideal}). 
  The remaining set $H$ of $m$ lines are horizontal. Two consecutive lines are
  placed at (vertical) distance at most $\eps$. We place all lines such that
  they all intersect \I. It follows that \c jumps $O(n^2)$ times between the
  lines in $H$. The lemma follows.
\end{proof}

Two entities $\sigma$ and $\psi$ are $\eps$\emph{-connected} at time $t$ if
there is a sequence $\sigma=\sigma_0,..,\sigma_k=\psi$ of entities such that
for all $i$, $\sigma_i$ and $\sigma_{i+1}$ are within distance $\eps$ of each
other at time $t$. A subset $\X' \subseteq \X$ of entities is $\eps$-connected
at time $t$ if all entities in $\X'$ are pairwise $\eps$-connected at time
$t$. The set $\X'$ is $\eps$-connected during an interval $I$, if they are
$\eps$-connected at any time $t \in I$. We now observe:

\begin{observation}
  \label{obs:jump}\hspace{-3pt}
  \c can jump from entity $\sigma$ to $\psi$ at time $t$ if and only if
  $\sigma$ and $\psi$ are $\eps$-connected at time $t$.
\end{observation}

At any time $t$, we can partition \X into maximal sets of $\eps$-connected
entities. The central trajectory \c must be in one of such maximal sets $\X'$:
it uses the trajectory of an entity $\sigma \in \X'$ (at time $t$), if and only
if $\sigma$ is the entity from $\X'$ closest to \I. More formally,
let $f_\sigma(t) = |\sigma(t)|$, and let $\L(\F) = \min_{f \in \F} f$ denote
the lower envelope of a set of functions \F.

\begin{observation}
  \label{obs:lower_envelope_1d}
  Let $\X' \ni \sigma$ be a set of entities that is $\eps$-connected during
  interval $J$, and assume that $\c \in \X'$ during $J$. For any time $t \in
  J$, we have that $\c(t) = \sigma(t)$ if and only if $f_\sigma$ is on the
  lower envelope of the set $\F' = \{f_\psi \mid \psi \in \X'\}$ at time $t$,
  that is, $f_\sigma(t) = \L(\F)(t)$.
\end{observation}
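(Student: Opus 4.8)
The plan is to invoke Lemma~\ref{lem:central_ideal}, which tells us that \c minimizes $\D'(\t) = \int_{t_0}^{t_\tau} |\t(t)| \dd t$. Since within an elementary interval \I coincides with the $x$-axis, we have $f_\psi(t) = |\psi(t)|$ for every $\psi \in \X'$, so minimizing $\D'$ is exactly minimizing the total area between \t and the $x$-axis. This integral decomposes over time, which suggests that, subject only to feasibility, an optimal \c should minimize the integrand $|\c(t)|$ at almost every instant.

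The crucial point is that on the interval $J$ this pointwise minimization is \emph{unconstrained} over $\X'$. By assumption \c stays inside the $\eps$-connected set $\X'$ throughout $J$, and since the entities of $\X'$ are pairwise $\eps$-connected at every $t \in J$, Observation~\ref{obs:jump} (together with the weakened jump condition) guarantees that \c may switch between any two entities of $\X'$ at any time in $J$ at no cost and without violating the trajectoid constraints. Hence the only restriction on \c during $J$ is that $\c(t) \in \{\psi(t) \mid \psi \in \X'\}$; there is no adjacency or continuity requirement to respect.

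I would then run an exchange argument. For any time $t \in J$ and any $\psi \in \X'$ we have $|\psi(t)| = f_\psi(t) \ge \L(\F')(t)$, so $\int_J |\c(t)| \dd t \ge \int_J \L(\F')(t) \dd t$, and this lower bound is attained by the trajectoid that at each instant follows the entity realizing the lower envelope, which is legal by the previous paragraph. Consequently, if \c failed to lie on the lower envelope on a subset of $J$ of positive measure, replacing it there by the lower-envelope entity would yield a feasible trajectoid with strictly smaller $\D'$, contradicting optimality. Thus an optimal \c coincides with $\L(\F')$ at almost every time; by general position the minimizing entity is unique except at the finitely many breakpoints where two of the functions $f_\psi$ cross, and away from these isolated times $\c(t) = \sigma(t)$ holds precisely when $\sigma$ is that unique minimizer, i.e.\ when $f_\sigma(t) = \L(\F')(t)$, which gives both directions of the equivalence.

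The main obstacle is establishing feasibility of the pointwise minimizer: one must verify that following the lower envelope really is a valid trajectoid, and this is exactly where the hypothesis that $\X'$ is $\eps$-connected during \emph{all} of $J$ (not merely at a single time) is needed, through Observation~\ref{obs:jump}. The only other delicate point is the measure-zero set of tie times, where two entities are simultaneously on the lower envelope; general position makes these isolated, so they do not affect the integral and the equivalence holds for every $t$ once a tie is read as $f_\sigma$ being on (possibly jointly with another function) the lower envelope.
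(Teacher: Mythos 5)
The paper offers no proof of this observation at all --- it is stated as self-evident (and the promised appendix of omitted proofs does not appear in the source) --- so there is no official argument to diverge from; your write-up is a correct formalization of exactly the reasoning the authors leave implicit, and the two places where you spend care are the right ones. First, feasibility: since $\X'$ is $\eps$-connected at \emph{every} $t \in J$, Observation~\ref{obs:jump} together with the weakened jump condition (a large jump is permitted when realizable as a chain of jumps each of length at most $\eps$) lets \c switch between any two entities of $\X'$ at any time in $J$ at zero cost, which legalizes both the envelope-following trajectoid and the spliced competitor in your exchange step; without this the exchange would be unsound, and you correctly identify it as the point where $\eps$-connectedness throughout $J$ (not just at one instant) is used. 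Second, ties: at a type-(ii) breakpoint ($\sigma(t) = -\psi(t)$) both $f_\sigma$ and $f_\psi$ lie on the envelope while $\c(t)$ can equal only one of the two positions, so the printed equivalence is literally false there unless ``on the lower envelope'' allows joint membership at isolated times, which is the reading you adopt and clearly the intended one. One line you could add: your ``almost every $t$'' conclusion upgrades to all $t$ in the interior of each piece because the functions $f_\psi$ and $\L(\F')$ are continuous, so equality on a full-measure subset of a piece forces equality on the whole piece. Finally, your exchange is the same device the paper does spell out elsewhere, e.g.\ in Lemma~\ref{lem:opt_clip_half_lines_at_v}, where a competitor \t is defined to equal \c outside an interval and a cheaper option inside it --- so your proof matches the paper's style as well as its intent.
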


Let $\X_1,..,\X_m$, denote a collection of maximal sets of entities that are
$\eps$-connected during time intervals $J_1,..,J_m$, respectively. Let $\F_i =
\{ f_\sigma \mid \sigma \in \X_i \}$, and let $\L_i$ be the lower envelope
$\L(\F_i)$ of $\F_i$ restricted to interval $J_i$. A lower envelope $\L_i$
has a break point at time $t$ if $f_\sigma(t) = f_\psi(t)$, for $\sigma,\psi
\in \X_i$. There are two types of break points: (i) $\sigma(t) = \psi(t)$,
or (ii) $\sigma(t) = -\psi(t)$. At events of type (i) the modified trajectories
of $\sigma$ and $\psi$ intersect. At events of the type (ii), $\sigma$ and
$\psi$ are equally far from \I, but on different sides of \I. Let $B = \{
(t,\sigma,\psi) \mid \L_i(t) = f_\sigma(t) = f_\psi(t) \land i \in \{1,..,m\}\}$
denote the collection of break points from all lower envelopes $\L_1,..,\L_m$.

\begin{lemma}
  \label{lem:break_points_in_at_most_one_set}
  Consider a triplet $(t,\sigma,\psi) \in B$. There is at most one lower
  envelope $\L_i$ such that $t$ is a break point in $\L_i$.
\end{lemma}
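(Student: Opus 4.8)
The plan is to argue by contradiction: suppose the triplet $(t,\sigma,\psi)$ is a break point of two distinct lower envelopes $\L_i$ and $\L_j$ with $i \neq j$. By the definition of a break point this means $\sigma,\psi \in \X_i \cap \X_j$, that $t \in J_i \cap J_j$, and that $v := f_\sigma(t) = f_\psi(t)$ equals both $\L_i(t) = \min_{\alpha \in \X_i} f_\alpha(t)$ and $\L_j(t) = \min_{\beta \in \X_j} f_\beta(t)$. In particular every entity of $\X_i$ and of $\X_j$ is at height at least $v$ from the ideal trajectory at time $t$, while $\sigma$ and $\psi$ attain exactly $v$. The goal is to show that these conditions force $\X_i = \X_j$, and hence $i = j$, contradicting the assumption.

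The first step is to understand which sets are active at the fixed time $t$. Since $\sigma,\psi \in \X_i$ and $\X_i$ is $\eps$-connected throughout $J_i \ni t$, both $\sigma$ and $\psi$ lie in the same maximal $\eps$-connected component $C$ of \X at time $t$; the same holds for $\X_j$, so $\X_i, \X_j \subseteq C$. I would then form the union $\X_i \cup \X_j$. Because $\X_i$ and $\X_j$ are each $\eps$-connected at every time of the overlap $J_i \cap J_j$ and they share the entity $\sigma$, their union is $\eps$-connected throughout $J_i \cap J_j$; moreover, since all members of both sets have $f_\alpha(t) \ge v$ with equality for $\sigma,\psi$, the pair $\sigma,\psi$ is still the tied minimum of the union at time $t$. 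This is the engine of the argument: merging two sets that both realize the break point yields a larger set that still realizes it.

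The decisive step is to convert this merging into a contradiction with maximality. I would show that the break point $(t,\sigma,\psi)$ can be assigned canonically to a single set — the component $C$ restricted to the maximal subinterval on which $\sigma$ and $\psi$ remain its two lowest entities — and that both $\X_i$ and $\X_j$ must coincide with this canonical set, using the maximality of the $\X_i$ over their intervals together with the general-position assumptions (no three trajectories meet at a point, and no two pairs reach distance $\eps$ simultaneously), which guarantee that the connectivity of $C$ and the identity of the minimizers each change by one event at a time near $t$. The two break-point types have to be treated separately here: for type (i), where $\sigma(t)=\psi(t)$, the modified trajectories cross and $\sigma,\psi$ are at distance $0 \le \eps$, so they are directly adjacent in the proximity graph and the local combinatorics is rigid; for type (ii), where $\sigma(t)=-\psi(t)$, they sit on opposite sides of \I at equal height, and I would use that no third entity can simultaneously tie or undercut them at $t$ to pin down the minimizers of $C$.

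I expect the main obstacle to be precisely this last step. Maximal $\eps$-connected sets are maximal only over their whole interval, so they may genuinely overlap: a larger set can persist over a shorter interval while a smaller set persists longer. Consequently the naive ``components partition \X'' reasoning does not by itself deliver uniqueness, and one must rule out the possibility that this persistence-type overlap yields two different sets having $\sigma,\psi$ as their common tied minimum at $t$. Overcoming this is where the minimality condition defining the lower envelope, rather than mere $\eps$-connectivity, must be exploited in full.
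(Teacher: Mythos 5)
Your proposal stalls at exactly the step the paper disposes of in one line, and the fix you gesture at points in the wrong direction. The crux you correctly isolate --- could two distinct sets $\X_i \neq \X_j$ both contain $\sigma$ and $\psi$ while $J_i$ and $J_j$ overlap at $t$? --- is resolved in the paper not by any property of the lower envelopes, but by what the collection $\X_1,..,\X_m$ actually is. These are not arbitrary sets, each maximal among the sets that stay $\eps$-connected throughout their own interval (under that reading your ``persistence-type overlap'' worry would indeed be real); they are the components of the \emph{time-slice} partition: at every fixed time the maximal $\eps$-connected sets partition \X, and each $\X_i$ is such a component persisting over $J_i$ --- precisely the edges of the Reeb graph of the grouping structure~\cite{grouping2013} used in Section~\ref{sub:Algorithm_1d}. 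Hence at any time $t$ an entity lies in at most one set $\X_\ell$ with $t \in J_\ell$, so if $\X_i$ and $\X_j$ share $\sigma$ (or $\psi$), the intervals $J_i$ and $J_j$ must be disjoint, and $t$ cannot be a break point of both $\L_i$ and $\L_j$. That is the paper's entire proof.

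By contrast, your plan defers the decisive step to a ``canonical assignment'' of the break point that is never carried out, and your final paragraph concedes the obstacle remains open. Moreover, the levers you propose --- the minimality condition defining the lower envelope, and the general-position guarantee that events near $t$ occur one at a time, with a case split on break-point types (i) and (ii) --- are the wrong ones: the lemma is purely a statement about the combinatorics of the sets $\X_i$ and their intervals, and the envelopes play no role beyond supplying the triplet. Your union step is sound as far as it goes ($\X_i \cup \X_j$ is $\eps$-connected at every time of $J_i \cap J_j$ since both sets contain $\sigma$, and time-slice maximality then forces $\X_i = \X_j$, after which maximality of the intervals gives $i = j$), but this merely re-derives, laboriously, the partition property the paper invokes directly. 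So the proposal as written has a genuine gap: it neither completes the argument nor identifies the definitional fact that makes the lemma true.
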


\begin{proof}
  Assume by contradiction that $t$ is a break point in both $\L_i$ and
  $\L_j$. At any time $t$, an entity can be in at most one maximal set
  $\X_\ell$. So if $\X_i$ and $\X_j$ share either entity $\sigma$ or $\psi$,
  then the intervals $J_i$ and $J_j$ are disjoint. It follows $t$ cannot lie
  in both intervals, and thus cannot be a break point in both $\L_i$ and
  $\L_j$. Contradiction.
\end{proof}

\begin{lemma}
  \label{lem:entities_on_boundary_zone}
  Let \A be an arrangement of $n$ lines, describing the movement of $n$
  entities during an elementary interval $J$. If there is a break point
  $(t,\sigma,\psi) \in B$, with $t \in J$, of type (ii), then $\sigma(t)$ and
  $\psi(t)$ lie on the boundary $\partial\mathcal{Z}$ of the zone $\mathcal{Z}$
  of \I in \A.
\end{lemma}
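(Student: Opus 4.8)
The plan is to reduce the statement to a claim about which entities are closest to $\I$, and then exploit the one-dimensional structure of $\eps$-connected sets. Recall that inside the elementary interval $J$ we assume $\I$ coincides with the $x$-axis, so $\I(t)=0$, and a type-(ii) break point means $\sigma(t)=-\psi(t)$; without loss of generality $\sigma(t)>0>\psi(t)$ with $|\sigma(t)|=|\psi(t)|$. First I would characterise $\partial\mathcal{Z}$ locally: at horizontal position $t$ the cell of $\A$ containing the point $(t,0)$ is bounded from above by the lowest line of $\A$ lying above $\I$ at $t$ and from below by the highest line lying below $\I$ at $t$. Hence $(t,\sigma(t))$ lies on $\partial\mathcal{Z}$ exactly when $\sigma$ is the entity closest to $\I$ from above among \emph{all} $n$ entities, and symmetrically $(t,\psi(t))$ lies on $\partial\mathcal{Z}$ exactly when $\psi$ is closest from below. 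So it suffices to upgrade the fact that $\sigma$ and $\psi$ are closest to $\I$ within $\X_i$ (which the break point gives through Observation~\ref{obs:lower_envelope_1d}) to being closest within all of $\X$.

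The key structural step is that in $\R^1$ a maximal $\eps$-connected set is an interval in the height order: sorting the entities by their value at time $t$, $\eps$-connectivity is the transitive closure of the relation ``consecutive heights differ by at most $\eps$'', so each maximal $\eps$-connected set at time $t$ is a maximal run of consecutive entities all of whose internal gaps are at most $\eps$. Since $\sigma$ and $\psi$ both lie in $\X_i$---which at time $t$ is such a maximal set---and sit on opposite sides of $\I$, this run spans the height interval $[\psi(t),\sigma(t)]$; hence every entity whose height at $t$ lies strictly between $\psi(t)$ and $\sigma(t)$ also belongs to $\X_i$.

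Combining the two ingredients finishes the argument. If some entity $\rho$ were closer to $\I$ from above, i.e.\ $0<\rho(t)<\sigma(t)$, then $\rho(t)\in(\psi(t),\sigma(t))$, so by the interval property $\rho\in\X_i$, yet $f_\rho(t)=\rho(t)<\sigma(t)=f_\sigma(t)=\L_i(t)$, contradicting that $f_\sigma$ attains the lower envelope $\L_i$ at $t$. The symmetric argument rules out an entity closer to $\I$ from below than $\psi$. Thus $\sigma$ and $\psi$ are the globally closest entities above and below $\I$ at time $t$, and by the characterisation above $\sigma(t),\psi(t)\in\partial\mathcal{Z}$. I expect the main obstacle to be phrasing the interval/contiguity property of $\eps$-connected sets precisely and dealing with the degenerate ties---an entity exactly at height $0$, or several entities sharing a height---which is exactly where the general-position assumptions are needed.
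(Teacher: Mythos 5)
Your proof is correct, and at its core it is the same argument as the paper's: suppose some entity $\rho$ lies strictly between $\psi(t)$ and $\sigma(t)$, show that $\rho$ must then belong to the same maximal $\eps$-connected set, and contradict the fact that $f_\sigma$ (resp.\ $f_\psi$) attains the lower envelope at time $t$. The one real difference is packaging: the paper handles the connectivity step by an inline case analysis (if $\rho \notin \X_j$ then $\sigma(t) > \rho(t) + \eps > \eps$, and since every other member of $\X_j$ has $f$-value at least $f_\sigma(t)$ there is no chain from $\sigma$ to $\psi$ at all, contradicting $\sigma, \psi \in \X_j$), whereas you isolate the same mechanism as a standalone contiguity lemma --- in $\R^1$ a maximal $\eps$-connected set at a fixed time is an interval in the height order --- which disposes of entities on both sides of \I symmetrically in one stroke and is, if anything, slightly cleaner; both versions rely equally on the implicit (and correct) assumption that within $J_i$ the set $\X_i$ is maximal at every single time $t$, which holds because the intervals are delimited by $\eps$-connectivity events. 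One genuine error to fix, though a harmless one: your characterisation ``$(t,\sigma(t))$ lies on $\partial\mathcal{Z}$ exactly when $\sigma$ is the entity closest to \I from above among all $n$ entities'' is false as a biconditional --- the zone contains every face met by \I \emph{anywhere} in $J$, so $(t,\sigma(t))$ can lie on the boundary of a zone face even when other lines separate it from the axis at time $t$, the face meeting \I only at some other time $t'$. Your argument invokes only the true direction (if no line crosses the open vertical segment from $(t,0)$ to $(t,\sigma(t))$ and $(t,0)$ lies on no line, then $(t,\sigma(t))$ bounds the zone face containing $(t,0)$), so the proof stands once you weaken ``exactly when'' to ``if''.
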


\begin{proof}
  Let $\X_j$ be the maximal $\eps$-connected set containing $\sigma$ and $\psi$, and
  assume without loss of generality that $f_\sigma(t) = \sigma(t) = -\psi(t) =
  f_\psi(t)$. Now, assume by contradiction that $\sigma$ is not on $\partial
  \mathcal{Z}$ at time $t$ (the case that $\psi(t)$ is not on
  $\partial\mathcal{Z}$ is symmetric). This means that there is an entity
  $\rho$ with $0 \leq \rho(t) < \sigma(t)$. If $\rho \in \X_j$, this
  contradicts that $f_\sigma(t)$ was on the lower envelope of $\X_j$ at time
  $t$. So $\rho$ is not $\eps$-connected to $\sigma$ at time $t$. Hence, their
  distance is at least $\eps$. We then have $\sigma(t) > \rho(t) + \eps >
  \eps$. It now follows that $\sigma$ and $\psi$ cannot be $\eps$-connected at
  time $t$: the distance between $\sigma$ and $\psi$ is bigger than $\eps$ so
  they are not directly connected, and $f_\sigma$ and $f_\psi$ are on $\L_j$,
  so there are also no other entities in $\X_j$ through which they can be
  $\eps$-connected. Contradiction.
\end{proof}

\begin{lemma}
  \label{lem:equidistant_jumps}
  Let \A be an arrangement of $n$ lines, describing the movement of $n$
  entities during an elementary interval $J$. The total number of break points
  $(t,\sigma,\phi) \in B$, with $t \in J$, of type (ii) is at most $6.5n$.
\end{lemma}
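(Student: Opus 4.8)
The plan is to reduce the count to the complexity of the zone $\mathcal{Z}$ of $\I$ in $\A$ and then invoke the Zone Theorem. By Lemma~\ref{lem:entities_on_boundary_zone}, at every type (ii) break point $(t,\sigma,\psi)\in B$ with $t\in J$ the two entities lie on $\partial\mathcal{Z}$, one strictly above and one strictly below $\I$ (here $\sigma(t)=-\psi(t)\neq 0$). Moreover, the proof of that lemma shows there is no line of $\A$ strictly between $\I$ and $\sigma(t)$, and none between $\I$ and $\psi(t)$; hence at time $t$ the point $\sigma(t)$ is the lowest point of $\A$ above $\I$ and $\psi(t)$ is the highest point below $\I$. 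Writing $\beta^+(t)$ for the height of the lowest line above $\I$ and $\beta^-(t)$ for the height of the highest line below $\I$, a type (ii) break point therefore forces $\beta^+(t)=\sigma(t)=-\psi(t)=-\beta^-(t)$.

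First I would note that $\beta^+$ and $\beta^-$ are precisely the upper and lower chains of $\partial\mathcal{Z}$: at each time $t$ the cell of $\A$ through which $\I$ passes is bounded from above by $\beta^+(t)$ and from below by $\beta^-(t)$, so $\beta^+$ and $\beta^-$ are $x$-monotone polygonal chains whose edges are edges of the zone. Consequently every type (ii) break point in $J$ is a zero of the piecewise-linear function $h=\beta^++\beta^-$. Because $h$ is linear between consecutive vertices of $\beta^+$ and $\beta^-$, it has at most one zero on each such piece; hence the number of type (ii) break points is at most one more than the combined number of vertices of $\beta^+$ and $\beta^-$ inside the vertical slab over $J$, which is bounded by the number of edges of $\partial\mathcal{Z}$ in that slab. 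General position guarantees the bracketing lines are unique at each time, so distinct break points occur at distinct times and counting zeros of $h$ really does count break points.

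Finally I would bound the number of edges of $\partial\mathcal{Z}$ by the Zone Theorem: the zone of a line in an arrangement of $n$ lines has $O(n)$ edges. Substituting the explicit constant from a sharp form of the Zone Theorem, and adding the constant number of extra pieces created by clipping $\beta^+$ and $\beta^-$ to the two endpoints of $J$, gives the claimed bound of $6.5n$.

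The part I expect to be the main obstacle is obtaining the \emph{exact} constant $6.5n$ rather than a plain $O(n)$. This requires a sharp version of the Zone Theorem (or, equivalently, a careful one-sided/horizon bound on the individual complexities of the upper chain $\beta^+$ and the lower chain $\beta^-$), together with honest bookkeeping of the additive terms: the ``$+1$'' per maximal linear piece of $h$, the crossings contributed where $\beta^+$ meets $-\beta^-$, and the at most constant number of pieces introduced by restricting to $J$. Everything else is routine once the reduction to the zone boundary is in place.
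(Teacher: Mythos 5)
Your proof is correct, and it reaches $6.5n$ by a genuinely different final counting step than the paper, while sharing the same two quantitative ingredients: the sharp zone theorem bound of $5.5n$ edges for the zone of a line in an arrangement of $n$ lines, plus $n$ additional edges from splitting at the crossings with \I (each line crosses \I at most once). The paper uses only the \emph{statement} of Lemma~\ref{lem:entities_on_boundary_zone} (boundary membership in $\partial\mathcal{Z}$) and then runs a charging scheme: each type (ii) jump is charged to the arriving edge if uncharged, otherwise to the departing edge, and a contradiction argument (a double charge would force two jumps between the same opposite pair of edges, while such a pair admits at most one equidistant time) shows each of the $6.5n$ edges is charged at most once. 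You instead extract from the \emph{proof} of Lemma~\ref{lem:entities_on_boundary_zone} the stronger bracketing property --- and your reading is accurate: that proof's contradiction uses only the existence of an entity $\rho$ with $0 \leq \rho(t) < \sigma(t)$, never the zone hypothesis, so it really shows no entity lies strictly between \I and either of $\sigma(t)$, $\psi(t)$ --- whence every type (ii) event is a zero of the piecewise-linear function $h = \beta^+ + \beta^-$, and you count one zero per linear piece. This buys a cleaner argument that sidesteps the paper's somewhat delicate two-jumps case analysis; what it costs is two small debts, both dischargeable. First, you need injectivity of pieces into edges: $\beta^+$ never revisits the same arrangement edge, because a line can stop being the lowest-above only when some line crosses \I (at most once per line, which is exactly what the edge-splitting from $5.5n$ to $6.5n$ accounts for) or when it is crossed by another line (which terminates that edge at a vertex); with this, pieces of $\beta^+$ and $\beta^-$ inject into the at most $6.5n$ split edges, and since the number of zeros is at most the number of linear pieces of $h$, which is at most $(\text{pieces of }\beta^+) + (\text{pieces of }\beta^-) - 1 \leq 6.5n - 1$, the additive constants you worried about actually cancel rather than overshoot --- no sharper bookkeeping is needed. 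Second, a linear piece of $h$ could vanish identically only if two lines are mirror images of each other about \I over an interval; this is a genericity assumption, but note the paper's proof silently needs the very same exclusion (its claim that an opposite pair of edges yields at most one type (ii) time fails in exactly that degenerate configuration), so you assume nothing beyond what the paper does. One cosmetic correction: $\beta^+$ and $\beta^-$ are not ``precisely'' the upper and lower chains of $\partial\mathcal{Z}$ --- zone cells extend in $x$ beyond the interval in which \I lies inside them, so $\beta^{\pm}$ traverse only portions of zone edges --- but your bound only requires that each piece of $\beta^{\pm}$ lie on a distinct zone edge, which is what the injectivity argument above delivers.
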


\begin{wrapfigure}[10]{r}{0.4\textwidth}
    \centering
    \vspace{-1em}
    \includegraphics{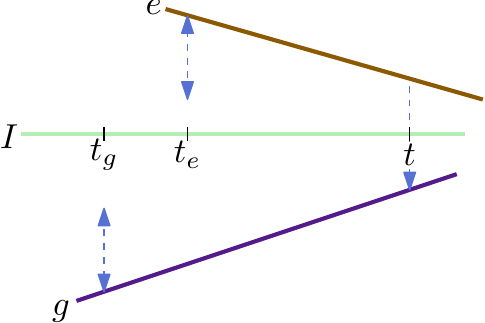}
    \vspace{-1em}
    \caption{The jumps of \L (dashed arrows) involving edges $e$ and $g$.}
    \label{fig:jumps_in_zone}
\end{wrapfigure}
\begin{proof}
  By Lemma~\ref{lem:break_points_in_at_most_one_set} all break points can be
  charged to exactly one set $\X_j$. From
  Lemma~\ref{lem:entities_on_boundary_zone} it follows that break points of
  type (ii) involve only entities whose lines in \A participate in the zone of
  \I.

  Let $E$ be the set of edges of $\partial\mathcal{Z}$. We have that $|E| \leq
  5.5n$~\cite{bepy-htlp-91,pa-cg-95}. We now split every edge that intersects
  \I, at the intersection point. Since every line intersects \I at most once,
  this means the number of edges in $E$ increases to $6.5n$. For every pair
  of edges $(e,g)$, that lie on opposite sides of \I, there is at most one time
  $t$ where a lower envelope $\L=\L_j$, for some $j$, has a break point of type
  (ii).

  Consider a break point of type (ii), that is, a time $t$ such that \L
  switches (jumps) from an entity $\sigma$ to an entity $\psi$, with $\sigma$
  and $\psi$ on opposite sides of \I. Let $e \in E$ and $g \in E$ be the
  edges containing $\sigma(t)$ and $\psi(t)$, respectively. If the arriving
  edge $g$ has not been charged before, we charge the jump to $g$. Otherwise,
  we charge it to $e$. We continue to show that every edge in $E$ is charged at most
  once. Since $E$ has at most $6.5n$ edges, the number of break points of type
  (ii) is also at most $6.5n$.

  We now show that either $e$ or $g$ has not been charged before. Assume, by
  contradiction, that both $e$ and $g$ have been charged before time $t$, at
  times $t_e$ and $t_g$, respectively. Consider the case that $t_g < t_e$
  (see Fig.~\ref{fig:jumps_in_zone}). At time $t_e$, the lower envelope \L
  jumps from an edge $h$ onto $e$ or vice versa. Since there is a jump
  involving edge $g$ at time $t_g$ and one at time $t$ it follows that at time
  $t_e$, $g$ is the closest edge in $E$ opposite to $e$. Hence, $h = g$. This
  means we jump twice between $e$ and $g$. Contradiction. The case $t_e < t_g$
  is symmetrical and the case $t_e = t_g$ cannot occur. It follows that $e$ or
  $g$ was not charged before time $t$, and thus all edges in $E$ are charged at
  most once.
\end{proof}

\begin{lemma}
  \label{lem:complexity_lower_envelopes_1d}
  The total complexity of all lower envelopes $\L_1,..,\L_m$ on $[t_i,t_{i+1}]$
  is $O(n^2)$.
\end{lemma}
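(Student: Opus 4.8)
The plan is to bound the total number of vertices of the lower envelopes $\L_1,\dots,\L_m$ restricted to $[t_i,t_{i+1}]$, since (up to a constant factor) this number is their total complexity. Every such vertex is of one of two kinds: either it is a \emph{break point}, where two functions $f_\sigma$ and $f_\psi$ of a common set $\X_\ell$ meet on the envelope --- that is, a triple $(t,\sigma,\psi)\in B$ --- or it is a \emph{bend} of a single function $f_\sigma$, occurring where entity $\sigma$ crosses the ideal trajectory \I and $f_\sigma=|\sigma|$ switches slope. I would bound these kinds separately, and within the break points I would use the type~(i)/type~(ii) classification already introduced. I would also keep in mind the endpoints of the intervals $J_\ell$, where the relevant envelope switches; since within $[t_i,t_{i+1}]$ the maximal $\eps$-connected sets change only when some pair crosses distance $\eps$, and each pair does so $O(1)$ times, there are only $O(n^2)$ such endpoints.

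For break points of type~(i), where $\sigma(t)=\psi(t)$, the two (unmodified) trajectories of $\sigma$ and $\psi$ actually intersect. Since within $[t_i,t_{i+1}]$ every trajectory is a single line, each pair of entities meets at most once, so there are at most $\binom{n}{2}=O(n^2)$ candidate times. By Lemma~\ref{lem:break_points_in_at_most_one_set} each such triple is a break point of at most one lower envelope, so type~(i) break points contribute only $O(n^2)$ in total.

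For break points of type~(ii), where $\sigma(t)=-\psi(t)$, I would localize the count to individual elementary intervals. Lemma~\ref{lem:equidistant_jumps} bounds the number of type~(ii) break points inside one elementary interval by $6.5n$, and by Lemma~\ref{lem:num_elementary_intervals} the interval $[t_i,t_{i+1}]$ contains only $n+2$ elementary intervals; multiplying gives $O(n^2)$. Finally, for the bends: within a single elementary interval \I is a line, so each $f_\sigma$ has at most one bend there, giving at most $n$ bends per elementary interval and hence $O(n^2)$ over all of $[t_i,t_{i+1}]$. Summing the type~(i) break points, the type~(ii) break points, the single-function bends, and the $O(n^2)$ interval endpoints yields the claimed $O(n^2)$ bound.

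The step I expect to be the main obstacle is the type~(ii) count, and it is precisely here that localizing to elementary intervals is essential. Over the full interval $[t_i,t_{i+1}]$ the ideal trajectory \I is only piecewise linear and zigzags, so a single pair of entities can be equidistant from \I on opposite sides many times --- once for each piece of \I the pair straddles --- and a naive per-pair argument would give only the far weaker bound $O(n^3)$. It is exactly the refined per-elementary-interval estimate of Lemma~\ref{lem:equidistant_jumps}, combined with the linear number of elementary intervals, that rescues the quadratic bound; getting that interplay right is the crux of the proof.
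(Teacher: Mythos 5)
Your proof is correct and follows essentially the same route as the paper: type~(i) break points are bounded by pairwise line intersections together with Lemma~\ref{lem:break_points_in_at_most_one_set}, and type~(ii) break points by applying Lemma~\ref{lem:equidistant_jumps} in each of the $O(n)$ elementary intervals of $[t_i,t_{i+1}]$. Your additional accounting of single-function bends and of the $O(n^2)$ endpoints of the intervals $J_\ell$ is a harmless (indeed slightly more careful) refinement that the paper leaves implicit.
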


\begin{proof}
  The break points in the lower envelopes are either of type (i) or of type
  (ii). We now show that there are at most $O(n^2)$ break points of either
  type.

  The break points of type (i) correspond to intersections between the
  trajectories of two entities. Within interval $[t_i,t_{i+1}]$ the entities
  move along lines, hence there are at most $O(n^2)$ such intersections. By
  Lemma~\ref{lem:break_points_in_at_most_one_set} all break points can be
  charged to exactly one set $\X_i$. It follows that the total number of break
  points of type (i) is $O(n^2)$.

  To show that the number of events of the second type is at most $O(n^2)$ as
  well we divide $[t_i,t_{i+1}]$ in $O(n)$ elementary intervals such that
  \I coincides with the $x$-axis. By Lemma~\ref{lem:equidistant_jumps} each such
  elementary interval contains at most $O(n)$ break points of type (ii).
\end{proof}

\begin{theorem}
  \label{thm:complexity_c_1D}
  Given a set of $n$ trajectories in $\R^1$, each with vertices at times
  $t_0,..,t_\tau$, a central trajectory \c has worst case complexity
  $O(\tau n^2)$.
\end{theorem}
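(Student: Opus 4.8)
The plan is to charge every vertex of \c to a break point of one of the lower envelopes $\L_1,\dots,\L_m$, so that the bound follows by summing Lemma~\ref{lem:complexity_lower_envelopes_1d} over all $\tau$ intervals $[t_i,t_{i+1}]$.

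First I would establish that, between consecutive changes of the $\eps$-connectivity structure, \c coincides with a single lower envelope. Let $\X_i$ be the maximal $\eps$-connected set containing \c during such a maximal interval $J_i$. By Observation~\ref{obs:jump}, \c may switch to any entity of $\X_i$, and by Observation~\ref{obs:lower_envelope_1d} the choice minimizing $|\c(t)|$ pointwise --- and hence $\D'$, which by Lemma~\ref{lem:central_ideal} is what \c optimizes --- is exactly the lower envelope $\L_i=\L(\F_i)$. Thus a vertex of \c can occur only (a)~at a break point of some $\L_i$, where \c switches entity inside its current set, or (b)~at an endpoint of some $J_i$, where the partition of \X into maximal $\eps$-connected sets changes and \c passes to the next envelope. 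Bends caused purely by the followed entity turning at an elementary-interval boundary total only $O(\tau n)$ by Lemma~\ref{lem:num_elementary_intervals} and are absorbed.

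Vertices of kind (a) are bounded directly: Lemma~\ref{lem:complexity_lower_envelopes_1d} gives total envelope complexity $O(n^2)$ on each interval $[t_i,t_{i+1}]$, and summing over the $\tau$ intervals yields $O(\tau n^2)$. For kind (b) I would bound the number $m$ of envelope pieces: the maximal partition changes only when a pair of entities crosses distance $\eps$, and within one interval the entities move linearly, so each of the $\binom{n}{2}$ pairs attains distance $\eps$ at most twice; this gives $O(n^2)$ changes per interval and $m=O(\tau n^2)$ overall. Adding the two contributions gives the claimed $O(\tau n^2)$.

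The step needing the most care is the completeness of this dichotomy: I must verify that an optimal \c never benefits from switching at a time that is neither an envelope break point nor a connectivity change. This is exactly Observation~\ref{obs:lower_envelope_1d} --- inside a fixed maximal set the pointwise-optimal location is the lower envelope, whose identity as a choice of entity is constant between its own break points --- so no additional vertices arise and the accounting is tight.
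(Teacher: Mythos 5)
Your proposal is correct and follows essentially the same route as the paper's own proof: both classify vertices of \c into break points of the lower envelopes $\L_1,\dots,\L_m$ (bounded via Lemma~\ref{lem:complexity_lower_envelopes_1d}) and connectivity-change events where pairs of entities reach distance exactly $\eps$ ($O(n^2)$ per interval), summed over the $\tau$ intervals. Your additional remarks---the explicit $O(\tau n)$ accounting for bends at elementary-interval boundaries and the per-pair ``distance $\eps$ at most twice'' count---only make explicit what the paper leaves implicit.
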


\begin{proof}
  A central trajectory \c is a piecewise function. From
  Observations~\ref{obs:jump} and~\ref{obs:lower_envelope_1d} it now follows
  that \c has a break point at time $t$ only if (a) two subsets of entities become
  $\eps$-connected or $\eps$-disconnected, or (b) the lower envelope of a set
  of $\eps$-connected entities has a break point at time $t$. Within a single
  time interval $J_i=[t_i,t_{i+1}]$ there are at most $O(n^2)$ times when two
  entities are at distance exactly $\eps$. Hence, the number of events of type
  (a) during interval $J_i$ is also $O(n^2)$. By
  Lemma~\ref{lem:complexity_lower_envelopes_1d} the total complexity of all
  lower envelopes of $\eps$-connected sets during $J_i$ is also
  $O(n^2)$. Hence, the number of break points of type (b) within interval $J_i$
  is also $O(n^2)$. The theorem follows.
\end{proof}

\subsection{Algorithm}
\label{sub:Algorithm_1d}

\begin{wrapfigure}[14]{r}{0.45\textwidth}
    \centering
    \vspace{-3em}
    \includegraphics{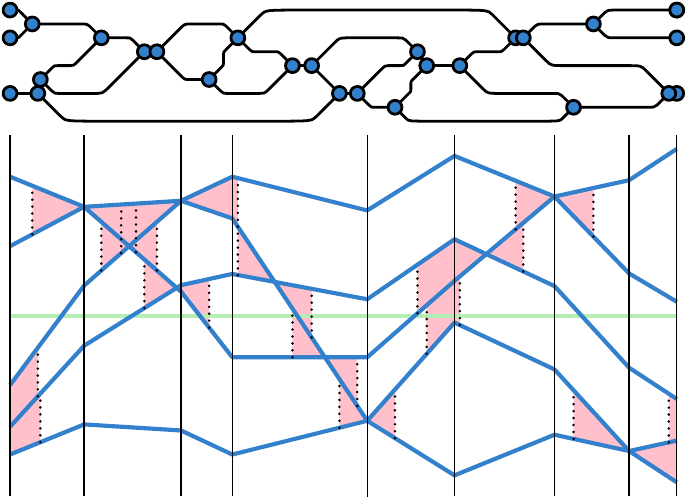}
    \caption{The Reeb graph for the moving entities from
  Fig.~\ref{fig:1d_slabs}. The dashed lines indicate that two entities are at
  distance $\eps$.}
    \label{fig:1d_reeb}
    \vspace{-1em}
\end{wrapfigure}
We now present an algorithm to compute a \trajectoid \c minimizing $\D'$. By
Lemma~\ref{lem:central_ideal} such a trajectoid is a central trajectory. The
basic idea is to construct a weighted (directed acyclic) graph that represents
a set of \trajectoid{}s containing an optimal \trajectoid. We can then find \c
by computing a minimum weight path in this graph.

The graph that we use is a weighted version of the Reeb graph that Buchin
\etal~\cite{grouping2013} use to model the trajectory grouping structure. We
review their definition here. The \emph{Reeb graph} \RG is a directed acyclic
graph. Each edge $e=(u,v)$ of \RG corresponds to a maximal subset of entities $C_e
\subseteq \X$ that is $\eps$-connected during the time interval
$[t_u,t_v]$. The vertices represent times at which the sets of $\eps$-connected
entities change, that is, the times at which two entities $\sigma$ and $\psi$
are at distance $\eps$ from each other and the set containing $\sigma$ merges with
or splits from the set containing $\psi$.  See Fig.~\ref{fig:1d_reeb} for an illustration.

By Observation~\ref{obs:jump} a central trajectory \c can jump from $\sigma$ to
$\psi$ if and only if $\sigma$ and $\psi$ are $\eps$-connected, that is, if
$\sigma$ and $\psi$ are in the same component $C_e$ of edge $e$. From
Observation~\ref{obs:lower_envelope_1d} it follows that on each edge $e$, \c
uses only the trajectories of entities $\sigma$ for which $f_\sigma$ occurs on
the lower envelope of the functions $\F_e = \{ f_\sigma \mid \sigma \in
C_e\}$. Hence, we can then express the cost for \c using edge $e$ by
\[
\omega_e = \int_{t_u}^{t_v} \L(\F_e)(t) \dd t.
\]

It now follows that \c follows a path in the Reeb graph \RG, that is, the set
of trajectoids represented by \RG contains a trajectoid minimizing $\D'$. So we
can compute a central trajectory by finding a minimum weight path in \RG
from a source to a sink.

\paragraph{Analysis} First we compute the Reeb graph as defined by Buchin
\etal~\cite{grouping2013}. This takes $O(\tau n^2\log n)$ time. Second we
compute the weight $\omega_e$ for each edge $e$. The Reeb graph \RG is a DAG,
so once we have the edge weights, we can use dynamic programming to compute a
minimum weight path in $O(|\RG|) = O(\tau n^2)$ time.  So all that remains is
to compute the edge weights $\omega_e$. For this, we need the lower envelope
$\L_e$ of each set $\F_e$ on the interval $J_e$. To compute the lower
envelopes, we need the ideal trajectory \I, which we can compute \I in $O(\tau n\log
n)$ time by computing the lower and upper envelope of the trajectories in each
time interval $[t_i,t_{i+1}]$.

Lemma~\ref{lem:complexity_lower_envelopes_1d} implies that the total complexity
of all lower envelopes is $O(\tau n^2)$. To compute them we have two
options. We can simply compute the lower envelope from scratch for every edge
of \RG. This takes $O(\tau n^2 \cdot n\log n) = O(\tau n^3 \log n)$
time. Instead, for each time interval $J_i=[t_i,t_{i+1}]$, we compute the
arrangement \A representing the modified trajectories on the interval $J_i$,
and use it to trace $\L_e$ in \A for every edge $e$ of \RG.

An arrangement of $m$ line segments can be built in $O(m \log m + A)$ time,
where $A$ is the output complexity \cite{as-aa-00}. In total \A consists of
$O(n^2)$ line segments: $n+2$ per entity. Since each pair of trajectories
intersects at most once during $J_i$, we have that $A = O(n^2)$. Thus, we can
build \A in $O(n^2 \log n)$ time. The arrangement \A represents all break
points of type (i), of all functions $f_\sigma$. We now compute all pairs of
points in \A corresponding to break points of type (ii). We do this in $O(n^2)$
time by traversing the zone of \I in \A.

We now trace the lower envelopes through \A: for each edge $e=(u,v)$ in the
Reeb graph with $J_e \subseteq J_i$, we start at the point $\sigma(t_u)$,
$\sigma \in C_e$, that is closest to \I, and then follow the edges in \A
corresponding to $\L_e$, taking care to jump when we encounter break points of
type (ii). Our lower envelopes are all disjoint (except at endpoints), so we
traverse each edge in \A at most once. The same holds for the jumps. We can
avoid costs for searching for the starting point of each lower envelope by
tracing the lower envelopes in the right order: when we are done tracing
$\L_e$, with $e=(u,v)$, we continue with the lower envelope of an outgoing edge
of vertex $v$. If $v$ is a split vertex where $\sigma$ and $\psi$ are at
distance $\eps$, then the starting point of the lower envelope of the other edge is
either $\sigma(t_v)$ or $\psi(t_v)$, depending on which of the two is farthest
from \I. It follows that when we have \A and the list of break points of type
(ii), we can compute all lower envelopes in $O(n^2)$ time. We conclude:

\begin{theorem}
  \label{thm:central_trajectory_1d}
  Given a set of $n$ trajectories in $\R^1$, each with vertices at times
  $t_0,..,t_\tau$, we can compute a central trajectory \c in $O(\tau n^2\log n)$ time
  using $O(\tau n^2)$ space.
\end{theorem}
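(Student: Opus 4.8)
The plan is to reduce computing a central trajectory to a shortest-path computation in the weighted Reeb graph $\RG$ of Buchin et al., and then to concentrate essentially all of the work on computing the edge weights fast enough. By Lemma~\ref{lem:central_ideal} it suffices to find a trajectoid minimizing $\D'$, the area between $\t$ and the ideal trajectory $\I$; and by Observations~\ref{obs:jump} and~\ref{obs:lower_envelope_1d}, an optimal such trajectoid follows, on each edge $e=(u,v)$ of $\RG$, the lower envelope $\L(\F_e)$ of the distance functions of the entities in the $\eps$-connected component $C_e$. Setting $\omega_e=\int_{t_u}^{t_v}\L(\F_e)(t)\dd t$, a minimum-weight source-to-sink path in $\RG$ thus corresponds to a central trajectory. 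As $\RG$ is a DAG of complexity $O(\tau n^2)$, this path can be found by dynamic programming in $O(|\RG|)=O(\tau n^2)$ time once the weights are available, and combining it with the traced envelopes yields the actual trajectory.

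First I would compute $\I$ by taking the upper and lower envelope of the $n$ lines in each of the $\tau$ intervals, in $O(\tau n\log n)$ time, and change coordinates so that $\I$ coincides with the $x$-axis on every elementary interval. Next I would build the Reeb graph; by the result of Buchin et al.\ this takes $O(\tau n^2\log n)$ time and gives a graph of complexity $O(\tau n^2)$.

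The heart of the proof, and the step I expect to be the main obstacle, is computing all weights $\omega_e$ within the $O(\tau n^2\log n)$ budget: building each lower envelope from scratch would cost $O(\tau n^3\log n)$, which is too slow. Instead, for each interval $J_i=[t_i,t_{i+1}]$ I would build once the arrangement $\A$ of the modified-trajectory segments; since $\I$ has $O(n)$ elementary subintervals in $J_i$ there are $O(n)$ segments per entity, and since each pair of trajectories crosses at most once in $J_i$, $\A$ has $O(n^2)$ complexity and is built in $O(n^2\log n)$ time---hence $O(\tau n^2\log n)$ overall. The arrangement encodes all type-(i) break points, while the type-(ii) break points (entities equidistant from $\I$ on opposite sides) I would precompute by one traversal of the zone of $\I$ in $\A$, in $O(n^2)$ time. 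The key efficiency observation, used to beat the naive bound, is that the envelopes $\L_e$ over distinct edges $e$ with $J_e\subseteq J_i$ are pairwise disjoint except at shared endpoints, because at any time the central trajectory lies in exactly one maximal $\eps$-connected set. Tracing each $\L_e$ through $\A$---walking along arrangement edges and using the precomputed type-(ii) jumps---therefore visits every arrangement edge and every jump at most once. Processing the edges of $\RG$ in an order that follows the graph (continuing into an outgoing edge of $v$ as soon as edge $(u,v)$ is finished, reading the next envelope's starting point off the split or merge at $v$) removes any search for where each envelope begins. All envelopes on $J_i$ are thus traced in $O(n^2)$ time, $O(\tau n^2)$ in total, and the weights $\omega_e$ follow by integration.

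Summing the costs, the running time is dominated by constructing the Reeb graph and the per-interval arrangements, both $O(\tau n^2\log n)$. For space, the Reeb graph and the $O(\tau n^2)$ edge weights need $O(\tau n^2)$, and the arrangements---processed one interval at a time---need only $O(n^2)$, so $O(\tau n^2)$ space suffices, establishing the claim.
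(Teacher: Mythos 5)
Your proposal is correct and follows essentially the same route as the paper's own proof: a minimum-weight path in the weighted Reeb graph (computed via dynamic programming on the DAG), with the edge weights obtained by building one $O(n^2)$-size arrangement per interval $[t_i,t_{i+1}]$, precomputing the type-(ii) break points by traversing the zone of $\I$, and tracing the pairwise-disjoint lower envelopes $\L_e$ in graph order so that each arrangement edge and each jump is visited at most once. The cost accounting ($O(\tau n^2\log n)$ time dominated by the Reeb graph and arrangement constructions, $O(\tau n^2)$ space) matches the paper's as well.
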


\paragraph{A central trajectory without jumps} When our entities move in
$\R^1$, it is not yet necessary to have discontinuities in \c, i.e.~we can set
$\eps = 0$. In this case we can give a more precise bound on the complexity of
\c, and we can use a slightly easier algorithm. The details can be found in
Appendix~\ref{app:oned_no_jumps}.

\section{Entities moving in $\R^d$}
\label{sec:higher_dimensions}

In the previous section, we used the ideal trajectory \I, which minimizes the
distance to the farthest entity, ignoring the requirement to stay on an input
trajectory. The problem was then equivalent to finding a \trajectoid that
minimizes the distance to the ideal trajectory. In $\R^d$, with $d > 1$,
however, this approach fails, as the following example shows.

\eenplaatje {2d_not_ideal} {Point $p$ is closest to the ideal point $m$,
  however the smallest enclosing disk centered at $q$ is smaller than that of $p$.}

\begin {observation}
  Let $P$ be a set of points in $\R^2$.  The point in $P$ that minimizes the distance to
  the ideal point (i.e., the center of the smallest enclosing disk of $P$) is
  not necessarily the same as the point in $P$ that minimizes the distance to
  the farthest point in $P$.
\end {observation}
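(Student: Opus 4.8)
The plan is to prove this non-equivalence by exhibiting an explicit point set $P \subset \R^2$ in which the point of $P$ nearest to the center $m$ of the smallest enclosing disk is \emph{not} the point of $P$ that minimizes the distance to its farthest neighbour. Writing $D(x) = \max_{\psi \in P}\|x\psi\|$, the goal is a configuration containing two distinct points $p, q \in P$ with $\|pm\| < \|qm\|$ (so $p$ is the point of $P$ closest to the ideal point) while at the same time $D(q) < D(p)$ (so the enclosing disk centered at $q$ is strictly smaller than the one centered at $p$). Since the claim is a purely negative, static statement, a single well-chosen example suffices.

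The construction I would use pins the smallest enclosing disk with a diameter pair and then places $p$ and $q$ inside it. Concretely, take $a=(-1,0)$ and $b=(1,0)$, so that every disk containing $P$ has radius at least $1$; the radius-$1$ disk about the origin will then be the smallest enclosing disk, giving $m=(0,0)$ and radius $R=1$, provided all remaining points lie strictly inside it. I would add $p=(\delta,0)$ with small $\delta>0$, so that $\|pm\|=\delta$, and $q=(0,h)$ with $\|qm\|=|h|$. Intuitively, $p$ sits just off the center along the diameter $ab$ and is therefore geometrically closest to $m$, whereas $q$ is pushed off the diameter axis so that both $a$ and $b$ are a little nearer to it, shrinking its farthest-point distance.

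It then remains to verify three things. First, $m=(0,0)$ is the center of the smallest enclosing disk: this holds as soon as $\delta<1$ and $|h|<1$, since then $p$ and $q$ lie strictly inside the radius-$1$ disk pinned by the diameter $ab$. Second, $p$ is the point of $P$ closest to $m$, which needs $\delta<|h|$ (and $\delta<1$, so that $p$ beats $a$ and $b$ as well). Third, $D(q)<D(p)$: a short computation gives $D(p)=\max(1+\delta,\sqrt{\delta^2+h^2})=1+\delta$ and $D(q)=\sqrt{1+h^2}$ in the relevant range, so the requirement becomes $\sqrt{1+h^2}<1+\delta$, i.e. $h^2<2\delta+\delta^2$. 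The main (and essentially only) obstacle is to satisfy the two opposing constraints $\delta^2<h^2$ and $h^2<2\delta+\delta^2$ simultaneously; but $\delta^2<2\delta+\delta^2$ holds for every $\delta>0$, so the open interval for $h^2$ is nonempty, and a concrete choice such as $\delta=\tfrac12$, $h^2=\tfrac12$ works. Finally, one checks $D(a)=D(b)=2$, which exceeds both $D(p)=\tfrac32$ and $D(q)=\sqrt{\tfrac32}$, confirming that $q$ rather than $p$ is the global minimizer of $D$. This establishes the observation.
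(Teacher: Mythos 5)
Your proposal is correct and takes essentially the same approach as the paper: both exhibit a small explicit counterexample in which the smallest enclosing disk is pinned by points on its boundary (an equilateral triangle $a,b,c$ in the paper, the diametral pair $a=(-1,0)$, $b=(1,0)$ in yours) and two interior points $p,q$ whose rankings under distance-to-center and distance-to-farthest-point are reversed. If anything, your version is more fully worked out --- the paper argues from a figure, while you pin down the feasible region $\delta^2 < h^2 < 2\delta + \delta^2$ and verify a concrete instance ($\delta = \tfrac12$, $h^2 = \tfrac12$) numerically.
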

\begin {proof}
  See Fig.~\ref {fig:2d_not_ideal}.  Consider three points $a$, $b$ and $c$
  at the corners of an equilateral triangle, and two points $p$ and $q$ close
  to the center $m$ of the circle through $a$, $b$ and $c$.  Now $p$ is closer
  to $m$ than $q$, yet $q$ is closer to $b$ than $p$ (and $q$ is as far from
  $a$ as from $b$).
\end {proof}

\subsection{Complexity}
\label{sub:Complexity_2D}

It follows from Lemma~\ref{lem:lowerbound_complexity_1D} that the complexity of
a central trajectory for entities moving in $\R^d$ is at least $\Omega(\tau
n^2)$. In this section, we prove that the complexity of \c within a single time
interval $[t_i,t_{i+1}]$ is at most $O(n^{5/2})$. Thus, the complexity over all
$\tau$ time intervals is $O(\tau n^{5/2})$.

Let \F denote the collection of functions $D_\sigma$, for $\sigma \in \X$. We
partition time into intervals $J'_1,..,J'_{k'}$ such that in each interval
$J'_i$ all functions $D_\sigma$ restricted to $J'_i$ are \emph{simple}, that
is, they consist of just one piece. We now show that each function $D_\sigma$
consists of at most $\tau(2n-1)$ pieces, and thus the total number of intervals
is at most $O(\tau n^2)$. See Fig.~\ref{fig:intro_example_slabsanddistances}
for an illustration.

\begin{lemma}
  \label{lem:D_hyperbolic}
  Each function $D_\sigma$ is piecewise hyperbolic and
  consists of at most $\tau(2n-1)$ pieces.
\end{lemma}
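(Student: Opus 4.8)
The plan is to analyze $D_\sigma$ one time interval $[t_i,t_{i+1}]$ at a time and then sum over the $\tau$ intervals. Within such an interval every entity moves along a line, so both $\sigma(t)$ and $\psi(t)$ are affine functions of $t$, and hence so is the difference $\sigma(t)-\psi(t)$. First I would observe that the squared pairwise distance $\|\sigma(t)-\psi(t)\|^2$ is therefore a polynomial of degree at most two in $t$, so the distance $\|\sigma(t)\psi(t)\| = \sqrt{\alpha t^2+\beta t+\gamma}$ is the nonnegative square root of a quadratic. Completing the square shows the graph of such a function is an arc of a hyperbola (degenerating to a horizontal line or a pair of half-lines when the relative velocity vanishes, or when the quadratic has a double root), which establishes the ``hyperbolic'' part of the claim on each maximal piece.

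Next, since $D_\sigma(t)=\max_{\psi\in\X}\|\sigma(t)\psi(t)\|$, the restriction of $D_\sigma$ to $[t_i,t_{i+1}]$ is exactly the upper envelope of the $n$ hyperbolic arcs $t\mapsto\|\sigma(t)\psi(t)\|$, one per $\psi\in\X$ (the arc for $\psi=\sigma$ being the constant $0$, which is harmless and never on the envelope unless all distances vanish). To bound the number of pieces I would show any two of these arcs cross at most twice: the equation $\|\sigma(t)\psi(t)\| = \|\sigma(t)\phi(t)\|$ is equivalent, after squaring the two nonnegative sides, to the equality of two polynomials of degree at most two, and two distinct quadratics agree in at most two points. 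Thus the arcs pairwise intersect at most twice.

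With pairwise intersections bounded by two, the upper envelope of these $n$ functions is a Davenport--Schinzel sequence of order $2$, whose complexity is at most $\lambda_2(n)=2n-1$. Hence $D_\sigma$ has at most $2n-1$ pieces on each interval $[t_i,t_{i+1}]$, and summing over the $\tau$ intervals yields at most $\tau(2n-1)$ pieces, as claimed. (Breakpoints induced by the changes of direction at the interval endpoints are already accounted for by counting pieces per interval.)

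The main obstacle, such as it is, is not the envelope counting, which follows immediately from the $\lambda_2(n)=2n-1$ bound, but rather cleanly justifying that each maximal piece is a genuine hyperbolic arc and that the ``at most two crossings'' property persists in the degenerate configurations: equal leading coefficients (so that the crossing equation becomes linear), tangencies, and pairs of entities with zero relative velocity whose distance function is constant. Handling these degeneracies is where I would be most careful, but each only ever reduces the number of crossings and pieces, so none of them threatens the stated bound.
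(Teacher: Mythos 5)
Your proof is correct and follows essentially the same route as the paper's: within each interval $[t_i,t_{i+1}]$ the pairwise distance functions are hyperbolic arcs of the form $\sqrt{at^2+bt+c}$, any two intersect at most twice (which you additionally justify by squaring), so the upper envelope defining $D_\sigma$ has at most $\lambda_2(n)=2n-1$ pieces, giving $\tau(2n-1)$ over all intervals. Your extra care with degeneracies is a welcome refinement but does not change the argument.
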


\begin{proof}
  Consider a time interval $J_i=[t_i,t_{i+1}]$. For any entity $\psi$ and any
  time $t \in J_i$, the function $\|\sigma(t)\psi(t)\| = \sqrt{at^2 + bt + c}$,
  with $a,b,c \in \R$, is hyperbolic in $t$. 
  Each pair of such functions can intersect at most twice. During $J_i$,
  $D_\sigma$ is the upper envelope of these functions, so it consists of
  $\lambda_2(n)$ pieces, where $\lambda_s$ denotes the maximum complexity of
  a Davenport-Schinzel sequence of order $s$~\cite{as-dssga-00}. We have
  $\lambda_2(n) = 2n -1$, so the lemma follows.
\end{proof}


\begin{lemma}
  \label{lem:intersections_F}
  The total number of intersections of all functions in \F is at most $O(\tau n^3)$.
\end{lemma}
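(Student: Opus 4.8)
The plan is to bound the number of intersections contributed by each pair of functions in \F separately, and then multiply by the number of pairs. Since $|\F| = n$, there are $\binom{n}{2} = O(n^2)$ pairs, so it suffices to show that any two functions $D_\sigma, D_\psi \in \F$ cross at most $O(\tau n)$ times over the whole interval $[t_0,t_\tau]$.

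First I would recall from Lemma~\ref{lem:D_hyperbolic} that each $D_\sigma$ is piecewise hyperbolic and consists of at most $\tau(2n-1) = O(\tau n)$ pieces, where on each piece $D_\sigma(t)$ has the form $\sqrt{at^2+bt+c}$. The key step is then to fix a pair $D_\sigma, D_\psi$ and overlay the breakpoints of both functions: this partitions $[t_0,t_\tau]$ into $O(\tau n)$ maximal subintervals, on each of which \emph{both} $D_\sigma$ and $D_\psi$ are single hyperbolic arcs. On such a subinterval an intersection $D_\sigma(t)=D_\psi(t)$ occurs precisely where the two (nonnegative) radicands agree, i.e. where $a_\sigma t^2 + b_\sigma t + c_\sigma = a_\psi t^2 + b_\psi t + c_\psi$; this is a quadratic equation in $t$ and hence has at most two roots, the degenerate case of two coinciding arcs being ruled out by the general-position assumption. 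Thus each subinterval carries at most two crossings, and $D_\sigma$ and $D_\psi$ cross at most $O(\tau n)$ times in total.

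Summing over all $O(n^2)$ pairs then yields $O(\tau n \cdot n^2) = O(\tau n^3)$ intersections, as claimed. The only point demanding any care is the piece count: it is exactly the $O(\tau n)$ bound per function supplied by Lemma~\ref{lem:D_hyperbolic} that keeps the per-pair crossing number at $O(\tau n)$ rather than something larger, so that combined with the ``at most two crossings of two hyperbolic arcs'' observation the estimate goes through. I do not anticipate a genuine obstacle here beyond excluding degenerate coincidences of arcs, which general position handles; the argument is an otherwise routine piece-counting and Davenport--Schinzel-style bound.
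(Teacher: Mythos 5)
Your proposal is correct and matches the paper's proof essentially step for step: fix a pair $D_\sigma, D_\psi$, overlay their breakpoints to get at most $\tau(4n-2) = O(\tau n)$ subintervals on which both are single hyperbolic arcs, observe that two such arcs cross at most twice, and sum over the $O(n^2)$ pairs. Your added justification that equating the radicands yields a quadratic with at most two roots is just a spelled-out version of the paper's ``intersect at most twice'' claim, so there is no substantive difference.
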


\begin{proof}
  Fix a pair of entities $\sigma,\psi$. By Lemma~\ref{lem:D_hyperbolic} there
  are at most $\tau(2n-1)$ time intervals $J$, such that $D_\sigma$ restricted
  to $J$ is simple. The same holds for $D_\psi$. So, there are at most
  $\tau(4n-2)$ intervals in which both $D_\sigma$ and $D_\psi$ are simple (and
  hyperbolic). In each interval $D_\sigma$ and $D_\psi$ intersect at most
  twice.
\end{proof}

We again observe that \c can only jump from one entity to another if they are
$\eps$-connected. Hence, Observation~\ref{obs:jump} holds entities moving in
$\R^d$ as well. As before, this means that at any time $t$, we can partition \X
into maximal sets of $\eps$-connected entities. Let $\X' \ni \sigma$ be a
maximal subset of $\eps$-connected entities at time $t$. This time, a central
trajectory \c uses the trajectory of entity $\sigma$ at time $t$, if and only
if $\sigma$ is the entity from $\X'$ whose function $\D_\sigma$ is
minimal. Hence, if we define $f_\sigma = D_\sigma$
Observation~\ref{obs:lower_envelope_1d} holds again as well.

Consider all $m'=O(n^2)$ intervals $J'_1,..,J'_{m'}$ that together form $[t_j,t_{j+1}]$. We
subdivide these intervals at points where the distance between two entities is
exactly $\eps$. Let $J_1,..,J_m$ denote the set of resulting intervals. Since
there are $O(n^2)$ times at which two entities are at distance exactly $\eps$,
we still have $O(n^2)$ intervals. Note that for all intervals $J_i$ and all
entities $\sigma$,  $f_\sigma$ is simple and totally defined on
$J_i$.

In each interval $J_i$, a central trajectory \c uses the trajectories of only
one maximal set of $\eps$-connected entities. Let $\X'_i$ be this set, let
$\F'_i = \{f_\sigma \mid \sigma \in \X'_i\}$ be the set of corresponding
functions, and let $\L_i$ be the lower envelope of $\F'_i$, restricted to
interval $J_i$. We now show that the total complexity of all these lower
envelopes is $O(n^{5/2})$. It follows that the maximal complexity of \c in
$J_i$ is at most $O(n^{5/2})$ as well.

\begin{lemma}
  \label{lem:complexity_L_in_interval}
  Let $J$ be an interval, let $\F$ be a set of hyperbolic functions that are
  simple and totally defined on $J$, and let $k$ denote the complexity of the
  lower envelope $\L$ of \F restricted to $J$. Then there are $\Omega(k^2)$
  intersections of functions in $\F$ that do not lie on $\L$.
\end{lemma}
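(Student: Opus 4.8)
The plan is to bound the complexity $k$ of $\L$ in terms of the number $p$ of \emph{distinct} functions that appear on $\L$, and then to exhibit one off-envelope intersection for almost every pair among these $p$ functions. Since the functions in $\F$ are hyperbolic, each pair meets at most twice, so $\L$ is the lower envelope of $p$ functions whose overlay is a Davenport--Schinzel sequence of order $2$; hence $k \le \lambda_2(p) = 2p-1$, giving $p \ge (k+1)/2 = \Omega(k)$. It therefore suffices to produce $\Omega(p^2)$ distinct off-envelope intersections.

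First I would, for each function $g$ appearing on $\L$, record the time $x_g$ at which $g$ first reaches $\L$, and sort the $p$ functions so that $x_f < x_g$ whenever $f$ precedes $g$. Fix such a pair $f,g$ with $x_f < x_g$. At time $x_f$ the function $f$ lies on the envelope while $g$ has not yet reached it, so $g(x_f) > \L(x_f) = f(x_f)$; symmetrically, at time $x_g$ the function $g$ lies on the envelope, so $f(x_g) \ge g(x_g)$. By the intermediate value theorem $f$ and $g$ cross at some time $t^\ast \in (x_f, x_g)$. The crucial observation is that $t^\ast < x_g$, i.e.\ strictly before $g$ ever appears on $\L$, so at time $t^\ast$ the function $g$ --- and with it the crossing point --- lies strictly above $\L$; the crossing is therefore off the envelope. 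Distinct unordered pairs $\{f,g\}$ give intersections of distinct function pairs, hence (in general position) distinct points, so this yields one off-envelope intersection per pair to which the argument applies.

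The one obstacle is the degenerate case in which the inequality $f(x_g) \ge g(x_g)$ is tight, which happens exactly when $f$ is the function that $g$ overtakes at its first appearance; then the only crossing the argument locates sits at $x_g$ itself, on the envelope. However, each $g$ overtakes exactly one predecessor at $x_g$, so there are at most $p-1$ such ``bad'' pairs (the first function on $\L$ overtakes no one). For each of the remaining $\binom{p}{2}-(p-1)$ pairs the inequality is strict --- here I would invoke general position, so that at the breakpoint $x_g$ only $g$ and the single function it overtakes attain the envelope value --- and the construction above then produces a genuine off-envelope crossing.

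Putting the pieces together, the number of off-envelope intersections is at least $\binom{p}{2}-(p-1) = (p-1)(p-2)/2 = \Omega(p^2)$, and since $p \ge (k+1)/2$ this is $\Omega(k^2)$, as required. The main points to get right are the degenerate-pair bookkeeping and the general-position claim that no third function passes through an envelope breakpoint, which together guarantee both the strictness of the inequalities and that the $\Omega(k^2)$ intersections counted are genuinely distinct.
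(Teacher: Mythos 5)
Your proof is correct and takes essentially the same route as the paper's: both use the Davenport--Schinzel bound $\lambda_2(p) = 2p-1$ to extract $\Omega(k)$ distinct functions appearing on \L, and both charge, for each such function, one off-envelope crossing (via an intermediate-value argument, using that all functions are totally defined on $J$) with each function that first appeared on \L earlier. Your explicit exclusion of the at most $p-1$ ``handoff'' pairs plays exactly the role of the paper's use of $\ell_{i-2}$ (skipping the immediately preceding envelope piece), and is if anything slightly more careful bookkeeping of that degenerate case.
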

\frank{It seems this should work for arbitrary functions that intersect at most
twice and span $J$. So we can generalize the lemma a bit further. Probably the
same holds for Lemma 24 then.}
\maarten{Or you could define it for arbitrary functions, and let $k$ be the number of distinct ones that appear on the lower envelope.}

\begin{wrapfigure}[14]{r}{0.50\textwidth}
    \centering
    \vspace{-1em}
    \includegraphics[scale=.8]{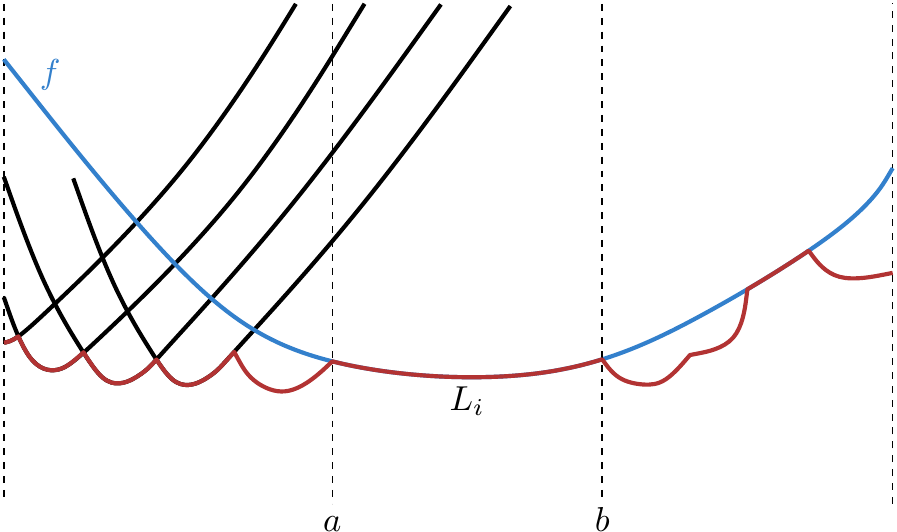}
    \caption{The function $f$ (blue) has at least $\ell_{i-2} = \lfloor (i-1)/2\rfloor$
      functions from $\F_i$ (black).}
    \label{fig:intersections_interval}
    \vspace{-1em}
\end{wrapfigure}
\begin{proof}
  Let $\L = L_1,..,L_k$ denote the pieces of the lower envelope, ordered from
  left to right. Consider any subsequence $\L'=L_1,...,L_i$ of the pieces. The
  functions in \F are all hyperbolic, so every pair of functions intersect at
  most twice. Therefore $\L'$ consists of at most $\lambda_2(|\F|) = 2|\F|-1$
  pieces. Hence, $i \leq 2|\F|-1$. The same argument gives us that there must
  be at least $\ell_i = \lfloor (i+1)/2 \rfloor$ distinct functions of $\F$
  contributing to $\L'$.

  Consider a piece $L_i = [a,b]$ such that $a$ is the first time that a
  function $f$ contributes to the lower envelope. That is, $a$ is the first
  time such that $f(t) = \L(t)$. Clearly, there are at least $\ell_k$ such
  pieces. Furthermore, there are at least $\ell_{i-2}$ distinct functions
  corresponding to the pieces $L_1,..,L_{i-2}$. Let $\F_i$ denote the set of
  those functions.

  All functions in $\F$ are continuous and totally defined, so they span time
  interval $J$. It follows that all functions in $\F_i$ must intersect $f$ at
  some time after the start of interval $J$, and before time $a$. Since $a$ was the
  first time that $f$ lies on \L, all these intersection points do not lie on
  \L. See Fig.~\ref{fig:intersections_interval}. In total we have at least
  \[     \sum_{i=2}^{\ell_k} \ell_{i-2}
    =    \sum_{i=2}^{\lfloor (k+1)/2 \rfloor} \lfloor (i-1)/2 \rfloor
    =    \sum_{i=1}^{\lfloor (k+1)/2 \rfloor -1} \lfloor i/2 \rfloor
    =    \Omega(k^2)
  \]
  such intersections.
\end{proof}

\begin{lemma}
  \label{lem:summed_lower_envelope_complexity}
  Let $\F_1,..,\F_m$ be a collection of $m$ sets of hyperbolic partial
  functions, let $J_1,..,J_m$ be a collection of intervals such that:
  \begin{itemize}[nosep]
  \item the total number of intersections between functions in $\F_1,..,\F_m$
    is at most $O(n^3)$,
  \item for any two intersecting intervals $J_i$ and $J_j$, $\F_i$ and $\F_j$
    are disjoint, and
  \item for every set $\F_i$, all functions in $\F_i$ are simple and totally
    defined on $J_i$.
  \end{itemize}
  Let $\L_i$ denote the lower envelope of $\F_i$ restricted to
  $J_i$,
  The total complexity of the lower envelopes $\L_1,..,\L_m$ is $O((m +
  n^2)\sqrt{n})$.
\end{lemma}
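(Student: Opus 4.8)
The plan is to bound the total complexity $\sum_{i=1}^m k_i$, where $k_i$ denotes the complexity of $\L_i$, by combining the quadratic lower bound on off-envelope intersections from Lemma~\ref{lem:complexity_L_in_interval} with the global intersection budget and a Cauchy--Schwarz / AM--GM estimate.

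First I would apply Lemma~\ref{lem:complexity_L_in_interval} to each pair $(\F_i,J_i)$: since the functions of $\F_i$ are hyperbolic, simple, and totally defined on $J_i$, there are $\Omega(k_i^2)$ intersections among the functions of $\F_i$ that do not lie on $\L_i$. The crucial bookkeeping step is to verify that these off-envelope intersections are distinct across different indices $i$, so that they can be summed against the global budget. An off-envelope intersection counted for $\L_i$ is a pair $f,g \in \F_i$ meeting at some time $t \in J_i$; if the same (pair, time) were also counted for $\L_j$, we would need $f,g \in \F_j$ and $t \in J_j$. When $J_i$ and $J_j$ are disjoint this is impossible because $t$ cannot lie in both, and when $J_i$ and $J_j$ overlap the second hypothesis forces $\F_i \cap \F_j = \emptyset$, so $f,g$ cannot belong to both sets. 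Hence every off-envelope intersection is charged to a unique interval, and $\sum_i(\text{off-envelope count}_i)$ is at most the total number of intersections, which is $O(n^3)$ by assumption.

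Combining the two facts gives $\sum_i k_i^2 = O(n^3)$, but I must be careful that the $\Omega(k_i^2)$ bound carries information only once $k_i$ exceeds a small constant (for $k_i \le 2$ the envelope may have no off-envelope intersection at all). I therefore split the indices into those with $k_i \le 2$ and those with $k_i \ge 3$. The first group contributes at most $2m = O(m)$ to $\sum_i k_i$. For the second group we have $\text{off-envelope count}_i \ge c\,k_i^2$ for a constant $c$, so $\sum_{k_i \ge 3} k_i^2 = O(n^3)$, and Cauchy--Schwarz over the at most $m$ such terms yields $\sum_{k_i \ge 3} k_i \le \sqrt{m}\,\sqrt{\sum_i k_i^2} = O(\sqrt{m}\, n^{3/2})$.

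Finally I would rewrite $\sqrt{m}\,n^{3/2} = \sqrt{(m\sqrt{n})(n^{5/2})}$ and apply AM--GM to bound it by $O(m\sqrt{n} + n^{5/2}) = O((m+n^2)\sqrt{n})$; the leftover $O(m) = O(m\sqrt{n})$ term from the small envelopes stays within the same bound, completing the estimate. The main obstacle is the distinctness argument in the charging step, namely ensuring that the quadratic counts from the individual intervals add up to at most the global $O(n^3)$ budget rather than overcounting, together with the routine-but-necessary handling of the small-$k_i$ regime in which Lemma~\ref{lem:complexity_L_in_interval} gives no information.
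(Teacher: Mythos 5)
Your proposal is correct and follows essentially the same route as the paper: the same key Lemma~\ref{lem:complexity_L_in_interval}, the same charging of off-envelope intersections to a unique interval via the disjointness hypothesis for overlapping intervals, and the same global $O(n^3)$ budget. The only divergence is in the final averaging step---where the paper splits intervals into heavy ($k_i > \sqrt{n}$) and light and bounds the heavy contribution by $\sum_i k_i \leq O(n^3)/\sqrt{n} = O(n^{5/2})$, you reach the identical $O(m\sqrt{n} + n^{5/2})$ bound via Cauchy--Schwarz plus AM--GM---and your explicit handling of the small-$k_i$ regime, which the paper's use of $\Omega(k_i^2)$ silently absorbs, is a minor gain in rigor rather than a gap.
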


\begin{proof}
  Let $k_i$ denote the complexity of the lower envelope $\L_i$. An interval
  $J_i$ is \emph{heavy} if $k_i > \sqrt{n}$ and \emph{light}
  otherwise. Clearly, the total complexity of all light intervals is at most
  $O(m\sqrt{n})$. What remains is to bound the complexity of all heavy
  intervals.

  Relabel the intervals such that $J_1,..,J_h$ are the heavy intervals. By
  Lemma~\ref{lem:complexity_L_in_interval} we have that in each interval $J_i$,
  there are at least $ck^2_i$ intersections involving the functions $\F_i$, for
  some $c \in \R$.

  Since for every pair of intervals $J_i$ and $J_j$ that overlap the sets
  $\F_i$ and $\F_j$ are disjoint, we can associate each intersection with at
  most one interval. There are at most $O(n^3)$ intersections in total, thus we
  have $c'n^3 \geq \sum_{i=1}^m ck^2_i \geq \sum_{i=1}^h ck^2_i$, for some $c'
  \in \R$. Using that for all heavy intervals $k_i > \sqrt{n}$ we obtain
  \[      c'n^3
     \geq \sum_{i=1}^h ck^2_i
     \geq \sum_{i=1}^h c\sqrt{n}k_i
     =    c\sqrt{n} \sum_{i=1}^h k_i.
  \]

  It follows that the total complexity of the heavy intervals is $\sum_{i=1}^h
  k_i \leq c'n^3/c\sqrt{n} = O(n^2\sqrt{n})$.
\end{proof}

By Lemma~\ref{lem:intersections_F} we have that the number of intersections
between functions in \F in time interval $[t_j,t_{j+1}]$ is $O(n^3)$. Hence,
the total number of intersections over all functions in all sets $\F'_i$ is
also $O(n^3)$. All functions in each set $\F'_i$ are simple and totally defined
on $J_i$, and all intervals $J_1,..,J_m$ are pairwise disjoint, so we can use
Lemma~\ref{lem:summed_lower_envelope_complexity}. It follows that the total
complexity of $\L'_1,..,\L'_m$ is at most $O(n^{5/2})$. Thus, in a single time
interval the worst case complexity of \c is also at most $O(n^{5/2})$. We conclude:

\begin{theorem}
  \label{thm:complexity_c_d}
  Given a set of $n$ trajectories in $\R^d$, each with vertices at times
  $t_0,..,t_\tau$, a central trajectory \c has worst case complexity
  $O(\tau n^{5/2})$.
\end{theorem}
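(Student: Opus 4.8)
The plan is to bound the complexity of \c separately inside each of the $\tau$ time intervals $[t_j,t_{j+1}]$ and then sum. Within a single such interval all entities move along straight lines, so it suffices to prove that the complexity of \c restricted to one interval is $O(n^{5/2})$; multiplying by the number $\tau$ of intervals then yields the claimed bound. The work therefore reduces entirely to the single-interval case, and the outer loop over $j$ contributes only a factor $\tau$.

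First I would fix an interval $[t_j,t_{j+1}]$ and recall the structural facts already established. By Observation~\ref{obs:jump}, at every time \c lies in a single maximal $\eps$-connected set, and by the $\R^d$ analogue of Observation~\ref{obs:lower_envelope_1d} (taking $f_\sigma = D_\sigma$) it follows the lower envelope of the functions $D_\sigma$ over that set. Next I would refine the interval into subintervals on which both the $\eps$-connectivity structure and the shape of every $D_\sigma$ are fixed: partitioning at the $O(n^2)$ times where two entities are at distance exactly $\eps$, together with the $O(n^2)$ endpoints of the simple pieces of the functions $D_\sigma$ (Lemma~\ref{lem:D_hyperbolic}), produces $m = O(n^2)$ subintervals $J_1,\dots,J_m$. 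On each $J_i$ every relevant $D_\sigma$ is simple, hyperbolic, and totally defined, and \c coincides with the lower envelope $\L_i$ of the functions for the one active $\eps$-connected set $\X'_i$.

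The complexity of \c inside $[t_j,t_{j+1}]$ is therefore at most the summed complexity of $\L_1,\dots,\L_m$, so I would apply Lemma~\ref{lem:summed_lower_envelope_complexity}. Its three hypotheses are exactly what the refinement guarantees: the subintervals are pairwise disjoint (so for overlapping intervals the active function sets are trivially disjoint), each function set is simple and totally defined on its subinterval, and by Lemma~\ref{lem:intersections_F} the total number of pairwise intersections of the functions $D_\sigma$ within $[t_j,t_{j+1}]$ is $O(n^3)$. With $m = O(n^2)$ the lemma gives a summed lower-envelope complexity of $O((m+n^2)\sqrt{n}) = O(n^{5/2})$, which bounds the complexity of \c in the interval.

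The genuinely hard work is not in this assembly but in Lemma~\ref{lem:summed_lower_envelope_complexity} and the ingredient it rests on, Lemma~\ref{lem:complexity_L_in_interval}. The main obstacle is explaining why a \emph{global} budget of only $O(n^3)$ intersections forces a sublinear average lower-envelope complexity per subinterval: the quadratic relation ``a lower envelope of complexity $k$ spends $\Omega(k^2)$ intersections off the envelope'', combined with a heavy/light dichotomy at threshold $\sqrt{n}$, converts the $O(n^3)$ intersection budget into a total of $O(n^{5/2})$ rather than the naive $O(n^2 \cdot n) = O(n^3)$. Once that per-interval bound is in hand, summing over the $\tau$ intervals completes the proof.
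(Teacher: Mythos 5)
Your proposal is correct and matches the paper's own proof essentially step for step: the same reduction to a single interval $[t_j,t_{j+1}]$, the same refinement into $m=O(n^2)$ subintervals at the $\eps$-distance events and at the breakpoints of the functions $D_\sigma$ from Lemma~\ref{lem:D_hyperbolic}, and the same combination of Lemma~\ref{lem:intersections_F} with Lemma~\ref{lem:summed_lower_envelope_complexity} (which indeed rests on the $\Omega(k^2)$ off-envelope intersection bound of Lemma~\ref{lem:complexity_L_in_interval} and the heavy/light split at threshold $\sqrt{n}$). You also correctly identify where the real work lies, so there is nothing to add.
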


\subsection{Algorithm}
\label{sub:Algorithm_higher_dim}

We use the same global approach as before: we represent a set of
trajectoids containing an optimal solution by a graph, and then compute a
minimum weight path in this graph.
The graph that we use, is a slightly modified Reeb graph. We split an edge
$e$ into two edges at time $t$ if there is an entity $\sigma \in C_e$ such that
$D_\sigma = f_\sigma$ has a break point at time $t$. All functions $f_\sigma$,
with $\sigma \in C_e$, are now simple and totally defined on $J_e$. This
process adds a total of $O(\tau n^2)$ degree-two vertices to the Reeb
graph. Let \RG denote the resulting Reeb graph (see Fig.~\ref{fig:overview_2d}).

\begin{figure}[t]
  \centering
  \includegraphics[width=\textwidth]{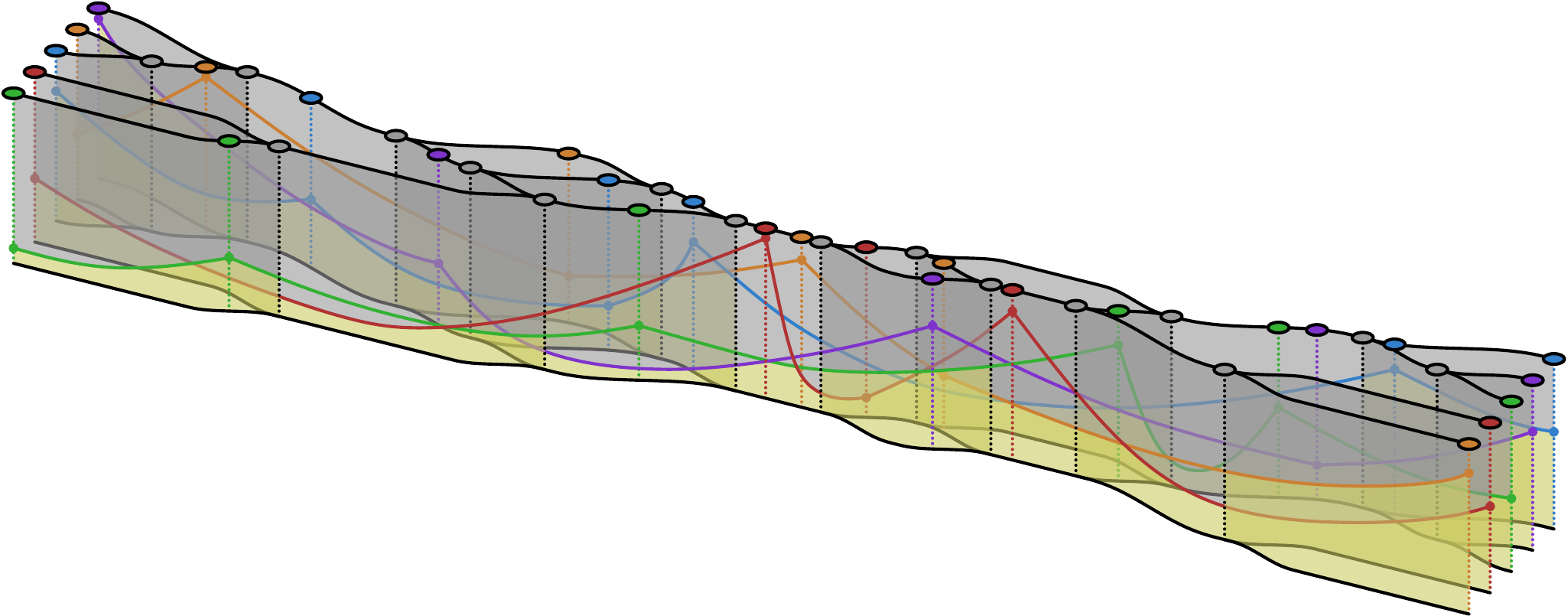}
  \caption{The modified Reeb graph \RG for five entities moving during a single
    interval, and the corresponding functions $f_\sigma$ for each entity $\sigma$.}
  \label{fig:overview_2d}
\end{figure}

To find all the times where we have to insert vertices, we explicitly compute
the functions $D_\sigma$. This takes $O(\tau n \lambda_2(n)\log n) = O(\tau n^2
\log n)$ time, where $\lambda_s$ denotes the maximum complexity of a
Davenport-Schinzel sequence of order $s$~\cite{as-dssga-00}, since within each
time interval $[t_i,t_{i+1}]$ each $D_\sigma$ is the upper envelope of a set of
$n$ functions that intersect each other at most twice. After we sort these
break points in $O(\tau n^2 \log n)$ time, we can compute the modified Reeb
graph \RG in $O(\tau n^2 \log n)$ time ~\cite{grouping2013}.\footnote{The
  algorithm to compute the Reeb graph is presented for entities moving in
  $\R^2$ in~\cite{grouping2013}, but it can easily be extended to entities
  moving in $\R^d$.}

Next, we compute all weights $\omega_e$, for each edge $e$. This means we have
to compute the lower envelope $\L_e$ of the functions $\F_e = \{ f_\sigma \mid
\sigma \in C_e \}$ on the interval $J_e$. All these lower envelopes have a
total complexity of at most $O(\tau n^{5/2})$:

\begin{lemma}
  \label{lem:complexity_Fes}
  The total complexity of the lower envelopes for all edges of the
  Reeb graph is $O(\tau n^{5/2})$.
\end{lemma}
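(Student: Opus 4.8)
The plan is to reduce this statement to the already-proven Lemma~\ref{lem:summed_lower_envelope_complexity} by checking that its three hypotheses are satisfied by the collection of function-sets $\F_e$ associated with the edges of the modified Reeb graph \RG. The key point is that \RG was constructed precisely so that on each edge $e$ all functions $f_\sigma$, $\sigma \in C_e$, are simple and totally defined on $J_e$, which is exactly the third hypothesis of Lemma~\ref{lem:summed_lower_envelope_complexity}.

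First I would handle the counting over time intervals. The edges of \RG are partitioned according to which original interval $J_j = [t_j, t_{j+1}]$ they fall in; within a single $J_j$ there are $O(n^2)$ edges (since the Reeb graph has $O(n^2)$ vertices per interval and we added only $O(n^2)$ degree-two vertices there). I would apply Lemma~\ref{lem:summed_lower_envelope_complexity} separately to each $J_j$ with $m = O(n^2)$, obtaining total lower-envelope complexity $O((m + n^2)\sqrt{n}) = O(n^2 \cdot \sqrt{n}) = O(n^{5/2})$ per interval. Summing over the $\tau$ intervals then yields $O(\tau n^{5/2})$, matching the claim.

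To invoke the lemma inside each $J_j$ I must verify its three preconditions for the sets $\F_e$ restricted to that interval. The intersection bound follows from Lemma~\ref{lem:intersections_F}: the total number of intersections among all functions $D_\sigma$ in $[t_j,t_{j+1}]$ is $O(n^3)$, and since each $\F_e$ consists of restrictions of these same $D_\sigma$, their intersections are a subset of these $O(n^3)$ intersections. The disjointness condition follows because for two edges $e, e'$ whose intervals $J_e, J_{e'}$ overlap, the corresponding $\eps$-connected components $C_e$ and $C_{e'}$ must be distinct maximal sets at the overlapping time, hence disjoint as entity sets, so $\F_e$ and $\F_{e'}$ are disjoint. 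The third condition, simplicity and total definedness on $J_e$, holds by the construction of \RG as noted above.

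The main obstacle---really a bookkeeping subtlety rather than a deep difficulty---is making sure the hypotheses of Lemma~\ref{lem:summed_lower_envelope_complexity} are applied with the right scope. That lemma is stated with an $O(n^3)$ intersection budget, so it must be invoked per time interval $J_j$ (where Lemma~\ref{lem:intersections_F} indeed gives $O(n^3)$), not globally across all $\tau$ intervals at once; applying it globally would require an $O(\tau n^3)$ budget and give a weaker bound. Once the scope is fixed to a single $J_j$, the three conditions are routine to check and the arithmetic $O((n^2 + n^2)\sqrt{n}) = O(n^{5/2})$ is immediate, after which the sum over $\tau$ intervals completes the proof.
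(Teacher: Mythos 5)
Your proposal is correct and follows essentially the same route as the paper's own proof: decompose per time interval $J_i$, verify the three hypotheses of Lemma~\ref{lem:summed_lower_envelope_complexity} (the $O(n^3)$ intersection budget from Lemma~\ref{lem:intersections_F}, disjointness of $\F_e$ and $\F_{e'}$ for overlapping edge intervals via maximality of the $\eps$-connected components, and simplicity by construction of the modified Reeb graph), and sum $O(n^{5/2})$ over the $\tau$ intervals. Your explicit remark about applying the lemma per interval rather than globally is exactly the scoping the paper uses, so there is nothing to correct.
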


\begin{proof}
  We consider each time interval $J_i=[t_i,t_{i+1}]$ separately. Let $\RG_i$
  denote the Reeb graph restricted to $J_i$. We now show that for each $\RG_i$,
  the total complexity of all lower envelopes $\L_e$ of edges $e$ in $\RG_i$ is
  $O(n^2\sqrt{n})$. The lemma then follows.

  By Lemma~\ref{lem:intersections_F}, the total number of intersections of all
  functions $\F_e$, with $e$ in $\RG_i$, is $O(n^3)$. Each set $\F_e$
  corresponds to an interval $J_e$, on which all functions in $\F_e$ are simple
  and totally defined. Furthermore, at any time, every entity is in at most one
  component $C_e$. So, if two intervals $J_e$ and $J_{e'}$ overlap, the sets of
  entities $C_e$ and $C_{e'}$, and thus also the sets of functions $\F_e$ and
  $\F_{e'}$ are disjoint. It follows that we can apply
  Lemma~\ref{lem:summed_lower_envelope_complexity}. Since $\RG_i$ has
  $O(n^2)$ edges the total complexity of all lower envelopes is
  $O(n^2\sqrt{n})$.
\end{proof}

We again have two options to compute all lower envelopes: either we compute all
of them from scratch in $O(\tau n^2 \cdot \lambda_2(n) \log n) = O(\tau n^3
\log n)$ time, or we use a similar approach as before. For each time interval,
we compute the arrangement \A of all functions \F, and then trace $\L_e$ in \A
for every edge $e$. For $n^2$ functions that pairwise intersect at most twice,
the arrangement can be built in $O(n^2 \log n + A)$ expected time, where $A$ is
the output complexity \cite{as-aa-00}. The complexity of \A is $O(n^3)$, so we
can construct it in $O(n^3)$ time. As before, every edge is traversed at most
once so tracing all lower envelopes $\L_e$ takes $O(n^3)$ time. It follows that
we can compute all edge weights in $O(\tau n^3)$ time, using $O(n^3)$ working
space.

Computing a minimum weight path takes $O(\tau n^2)$ time, and uses $O(\tau
n^2)$ space as before. Thus, we can compute \c in $O(\tau n^3)$ time and $O(n^3
+ \tau n^2)$ space.

\paragraph{Reducing the required working space} We can reduce the amount of
working space required to $O(n^2 \log n + \tau n^2)$ as follows. Consider
computing the edge weights in the time interval $J=[t_i,t_{i+1}]$. Interval $J$
is subdivided into $O(n^2)$ smaller intervals $J_1,..,J_m$ as described in
Section~\ref{sub:Complexity_2D}. We now consider groups of $r$ consecutive
intervals. Let $J$ be the union of $r$ consecutive intervals, we compute
the arrangement \A of the functions \F, restricted to time interval $J$. Since
every interval $J_i$ has at most $O(n^2)$ intersections \A has worst case
complexity $O(rn^2)$. Thus, at any time we need at most $O(rn^2)$ space to
store the arrangement. In total this takes $O(\sum_{i=1}^{n^2/r} (n_i \log n_i
+ A_i))$ time, where $n_i$ is the number of functions in the $i^\mathrm{th}$
group of intervals, and $A_i$ is the complexity of the arrangement in group
$i$. The total complexity of all arrangements is again $O(n^3)$. Since we cut
each function $D_\sigma$ into an additional $O(n^2/r)$ pieces, the total number
of functions is $O(n^3/r + n^2)$. Hence, the total running time is
$O((n^3/r)\log n + n^3)$. We now choose $r=\Theta(\log n)$ to compute all edge
weights in $[t_i,t_{i+1}]$ in $O(n^3)$ time and $O(n^2\log n)$ space. We
conclude:




\begin{theorem}
  \label{thm:central_trajectory_2d}
  Given a set of $n$ trajectories in $\R^d$, each with vertices at times
  $t_0,..,t_\tau$, we can compute a central trajectory \c in $O(\tau n^3)$ time
  using $O(n^2 \log n + \tau n^2)$ space. 
\end{theorem}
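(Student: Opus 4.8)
The plan is to follow the same template established for the $\R^1$ case (Section~\ref{sub:Algorithm_1d}): encode a family of trajectoids that contains an optimal one as a weighted DAG, give each edge $e$ a weight equal to the cost of the best sub-trajectoid it represents, and then extract \c as a minimum-weight source-to-sink path. The running time will be dominated by the weight computation, and the space bound will be the delicate part.

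First I would construct the modified Reeb graph \RG. This needs, inside each interval $[t_i,t_{i+1}]$, the break points of every function $D_\sigma$ so that each $f_\sigma=D_\sigma$ is simple on every resulting edge. Since $D_\sigma$ is the upper envelope of $n$ pairwise-twice-intersecting distance functions, computing all of them costs $O(\tau n\lambda_2(n)\log n)=O(\tau n^2\log n)$ time (Lemma~\ref{lem:D_hyperbolic}); sorting the resulting $O(\tau n^2)$ break points and running the grouping-structure construction of~\cite{grouping2013} then yields \RG in $O(\tau n^2\log n)$ time, adding only $O(\tau n^2)$ degree-two vertices.

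Next I would compute the edge weights $\omega_e=\int_{J_e}\L(\F_e)(t)\dd t$. By Lemma~\ref{lem:complexity_Fes} the envelopes $\L_e$ have total complexity $O(\tau n^{5/2})$, so computing each from scratch would cost $O(\tau n^3\log n)$. Instead, for each interval $[t_i,t_{i+1}]$ I would build the arrangement \A of the $O(n^2)$ simple hyperbolic pieces of the functions in \F, which has complexity $O(n^3)$ and can be built in $O(n^2\log n+n^3)$ expected time~\cite{as-aa-00}, and then trace each $\L_e$ directly through \A. Because distinct Reeb edges whose intervals overlap involve disjoint entity sets (every entity lies in a single $\eps$-connected component at any instant), the traced envelopes are pairwise disjoint, so each cell of \A is visited $O(1)$ times and tracing all of them costs $O(|\A|)=O(n^3)$; searching for starting points is avoided by tracing envelopes in left-to-right order along \RG, exactly as in $\R^1$. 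Summed over the $\tau$ intervals this gives $O(\tau n^3)$ time for the weights, after which a dynamic program over the DAG \RG extracts the minimum-weight path in $O(|\RG|)=O(\tau n^2)$ time.

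The main obstacle is the $O(n^2\log n+\tau n^2)$ space bound, since \A for a single interval already occupies $\Theta(n^3)$ space. To stay within budget I would avoid materializing \A in one piece: partition the $O(n^2)$ sub-intervals $J_1,\dots,J_m$ of $[t_i,t_{i+1}]$ into batches of $r$ consecutive sub-intervals and build, per batch, only the portion of \A spanning it, which has complexity $O(rn^2)$ and hence needs $O(rn^2)$ working space; an envelope crossing a batch boundary simply resumes in the next batch under the same ordering. Cutting each $D_\sigma$ at the $O(n^2/r)$ batch boundaries raises the number of function pieces to $O(n^3/r+n^2)$, so the total build time over all batches is $O((n^3/r)\log n+n^3)$ while the total arrangement complexity stays $O(n^3)$. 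Choosing $r=\Theta(\log n)$ makes the build time $O(n^3)$ and the peak working space $O(n^2\log n)$; combined with the $O(\tau n^2)$ space for \RG and the dynamic program, this gives the claimed $O(n^2\log n+\tau n^2)$ space at total time $O(\tau n^3)$, completing the proof.
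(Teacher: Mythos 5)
Your proposal is correct and follows essentially the same route as the paper's own proof: the same modified Reeb graph built in $O(\tau n^2\log n)$ time, the same per-interval arrangement of complexity $O(n^3)$ with disjoint lower-envelope tracing for the edge weights, and the identical batching of $r=\Theta(\log n)$ consecutive sub-intervals to cap working space at $O(n^2\log n)$ while keeping the build time at $O(n^3)$ per interval. All bounds and the final dynamic program over the DAG match the paper's argument step for step.
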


\section{Extensions}
\label{sec:Extensions}

We now briefly discuss how our results can be extended in various directions.

\paragraph {Other measures of centrality}

We based our central trajectory on the center of the smallest enclosing disk of
a set of points.  Instead, we could choose other static measures of centrality,
such as the Fermat-Weber point, which minimizes the sum of distances to the
other points, or the center of mass, which minimizes the sum of squared
distances to the other points. In both cases we can use the same general
approach as described in Section~\ref{sec:higher_dimensions}.

Let $\hat{D}^2_\sigma(t) = \sum_{\psi \in \X} \|\sigma(t)\psi(t)\|^2$ denote
the sum of the squared Euclidean distances from $\sigma$ to all other entities
at time $t$. This function $\hat{D}^2_\sigma$ is piecewise quadratic in $t$,
and consists of (only) $O(\tau)$ pieces. It follows that the total number of
intersections between all functions $\hat{D}^2_\sigma$, $\sigma \in \X$, is at
most $O(\tau n^2)$. We again split the domain of these functions into
elementary intervals. The Reeb graph \RG representing the $\eps$-connectivity
of the entities still has $O(\tau n^2)$ vertices and edges. Each vertex of \c
corresponds either to an intersection between two functions $\hat{D}^2_\sigma$
and $\hat{D}^2_\psi$, or to a jump, occurring at a vertex of \RG. It now
follows that \c has complexity $O(\tau n^2)$.

To compute a central trajectory, we compute a shortest path in the
(weighted) Reeb graph \RG. To compute the weights we again construct the
arrangement of all curves $\hat{D}^2_\sigma$, and trace the lower envelope
$\L_e$ of the curves associated to each edge $e \in \RG$. This can be done in $O(\tau n^2 \log n)$ time
in total.

The sum of Euclidean distances $\hat{D}_\sigma(t) = \sum_{\psi \in \X}
\|\sigma(t)\psi(t)\|$ is a sum of square roots, and cannot be represented analytically in an efficient
manner. Hence, we cannot efficiently compute a central trajectory for this measure.

\frank{What about lower bounds for all these variants?}

Similarly, depending on the application, we may prefer a different way of integrating over time. Instead of the integral of $D$, we may, for example, wish to minimize $\max_t D(\cdot,t)$ or $\int D^2(\cdot,t)$.
Again, the same general approach still works, but now, after constructing the Reeb graph, we compute the weights of each edge differently.

\frank{$\max$ should be equally expensive as $\int$: pwhyp. so maximum at break
point. Hence, we need to evaluate the lower envelopes at every break point.}

\frank{using $D^2$ should also just be equally expensive I guess.}


\paragraph{Minimizing the distance to the Ideal Trajectory \I} We saw that for
entities moving in $\R^1$, minimizing the distance from \c to the farthest
entity is identical to minimizing the distance from \c to the ideal trajectory
\I (which itself minimizes the distance to the farthest entity, but is not
constrained to lie on an input trajectory). We also saw that for entities
moving in $\R^d$, $d > 1$, these two problems are not the same. So, a natural
question is if we can also minimize the distance to \I in this case. It turns
out that, at least for $\R^2$, we can again use our general approach, albeit
with different complexities.

Demaine \etal~\cite{demaine2010kinetic} show that for entities moving along
lines\footnote{Or, more generally, along a curve described by a low degree
  polynomial.} in $\R^2$ the ideal trajectory \I has complexity
$O(n^{3+\delta})$ for any $\delta > 0$. \frank{What about $\R^d$?} It follows that the function
$\check{D}_\sigma(t) = \|\I(t)\sigma(t)\|$ is a piecewise hyperbolic function
with at most $O(\tau n^{3+\delta})$ pieces. The total number of intersections
between all functions $\check{D}_\sigma$, for $\sigma \in \X$, is then $O(\tau
n^{5+\delta})$. Similar to Lemma~\ref{lem:summed_lower_envelope_complexity}, we
can then show that all lower envelopes in \RG together have complexity $O(\tau
n^{4+\delta})$.
We then also obtain an $O(\tau n^{4+\delta})$ bound on the complexity of a
central trajectory \c minimizing the distance to \I.

To compute such a central trajectory \c we again construct \RG. To compute the
edge weights it is now more efficient to recompute the lower envelope $\L_e$
for each edge $e$ from scratch. This takes $O(\tau n^3 \cdot n \log n) = O(\tau
n^4 \log n)$ time, whereas constructing the entire arrangement may take up to
$O(\tau n^{5+\delta})$ time.

We note that the $O(n^{3+\delta})$ bound on the complexity of \I by Demaine
\etal~\cite{demaine2010kinetic} is not known to be tight. The best known lower bound is
only $\Omega(n^2)$. So, a better upper bound for this problem immediately also
gives a better bound on the complexity of \c.

\paragraph {Relaxing the input pieces requirement} We require each piece of the
central trajectory to be part of one of the input trajectories, and allow small
jumps between the trajectories. This is necessary, because in general no two
trajectories may intersect. Another interpretation of this dilemma is to
relax the requirement that the
output trajectory stays on an input trajectory at all times, and just require
it to be \emph {close} (within distance $\eps$) to an input trajectory at all times.
In this case, no discontinuities in the output trajectory are necessary.


We can model this by replacing each point entity by a disk of radius $\eps$. The
goal then is to compute a shortest path within the union of disks at all times.
 We now observe that
if at time $t$ the ideal trajectory \I is contained in the same component of
$\eps$-disks as \c, the central trajectory will follow \I. If \I lies outside
of the component, \c travels on the border of the $\eps$-disk (in the component
containing \c) minimizing $D(\cdot,t)$. In terms of the distance functions,
this behavior again corresponds to following the lower envelope of a set of
functions. We can thus identify the following types of break points of \c: (i)
break points of \I, (ii) breakpoints in one of the lower envelopes
$\L_1,..,\L_m$ corresponding to the distance functions of the entities in each
component, and (iii) break points at which \c switches between following \I and
following a lower envelope $\L_j$. There are at most $O(\tau n^{3+\delta})$
break points of type (i)~\cite{demaine2010kinetic}, and at most $O(\tau
n^2\sqrt{n})$ of type (ii). The break points of type (iii) correspond to
intersections between \I and the manifold that we get by tracing the
$\eps$-disks over the trajectory. The number of such intersections is at most
$O(\tau n^{4+\delta})$. Hence, in this case \c has complexity $O(\tau
n^{4+\delta})$. We can thus get an $O(\tau n^{5+\delta} \log n)$ algorithm by
computing the lower envelopes from scratch.

\small
\section*{Acknowledgments}
M.L. and F.S. are supported by the Netherlands Organisation for Scientific
Research (NWO) under grant 639.021.123 and 612.001.022, respectively.



\bibliographystyle{abbrv}
\bibliography{central_trajectories}

\clearpage
\appendix

\section{A continuous central trajectory for entities moving in $\R^1$}
\label{app:oned_no_jumps}

\subsection{Complexity}
\label{sub:Complexity}

We analyze the complexity of central trajectory \c; a \trajectoid that
minimizes $\D'$, in the case $\eps = 0$. We now show that on each elementary
interval \c has complexity $24n$. It follows that the complexity of \c during a
time interval $[t_i,t_{i+1}]$ is $24n^2+48n$, and that the total complexity is
$O(\tau n^2)$.

We are given $n$ lines representing the movement of all the entities during an
elementary interval. We split each line into two half lines at the point where
it intersects \I (the $x$-axis). This gives us an arrangement \A of $2n$
half-lines. A half-line $\ell$ is \emph{positive} if it lies above the
$x$-axis, and \emph{negative} otherwise. If the slope of $\ell$ is positive,
$\ell$ is \emph{increasing}. Otherwise it is \emph{decreasing}. For a given
point $p$ on $\ell$, we denote the ``sub''half-line of $\ell$ starting at $p$
by $\ell^{p\to}$.
Let \c be a \trajectoid that minimizes $\D'$.

\begin{lemma}
  \label{lem:opt_clip_half_lines_at_v}
  Let $\ell$ and $m$ be two positive increasing half-lines, of which $\ell$ has
  the largest slope, and let $v$ be their intersection point. At vertex $v$, \c does
  not continue along $\ell^{v\to}$.
\end{lemma}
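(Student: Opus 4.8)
The plan is to argue by contradiction via a local exchange: assuming the optimal trajectoid \c leaves $v$ along $\ell^{v\to}$, I will build another valid trajectoid with strictly smaller $\D'$, contradicting optimality. The geometric fact that drives everything is that, since $\ell$ and $m$ are both positive and increasing and cross at $v$ with $\ell$ the steeper one, we have $m(t) < \ell(t)$ for every $t > v$, and both stay strictly above the $x$-axis (above \I) on the entire remainder of the elementary interval. Hence, for $t > v$, the shallower line $m$ is a strictly cheaper place to be than $\ell$, because $|m(t)| = m(t) < \ell(t) = |\ell(t)|$. This is where the hypotheses ``positive'', ``increasing'', and ``$\ell$ has the largest slope'' all get used.

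First I would set up the rerouted trajectory $\c'$. It agrees with \c on all times up to $v$; at $v$ it moves onto $m$ (a legal switch, since $\ell$ and $m$ coincide at $v$ and here $\eps = 0$), and then follows $m$. Let $w^* > v$ be the first time after $v$ at which $m$ meets the original trajectory \c again, i.e.\ the first $t > v$ with $m(t) = \c(t)$. Just after $v$ we have $m < \ell = \c$, so by the choice of $w^*$ we get $m < \c$ throughout $(v, w^*)$; in particular both $m$ and \c are strictly positive there. At $w^*$ the line $m$ coincides with whatever half-line \c is currently on, so $\c'$ can switch back onto \c (again a legal $\eps = 0$ switch) and equal \c from $w^*$ onward. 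Thus $\c'$ is a valid trajectoid differing from \c only on $(v, w^*)$, where it is strictly lower while both are positive, giving $\D'(\c') < \D'(\c)$. If no such $w^*$ exists inside the elementary interval, then $m < \c$ all the way to the interval's end and I let $\c'$ follow $m$ to the end; the same cost comparison applies.

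The main obstacle is precisely this reconnection step: swapping in the cheaper $m$ only helps if the result can be spliced back into a legal trajectoid, and $m$ will in general not pass through the later vertices of \c. Choosing $w^*$ as the \emph{first} re-crossing of $m$ with the (continuous, piecewise-linear) path of \c is what resolves this: a crossing is automatically a coincidence point, so the splice is legal, and the first-crossing choice guarantees the swapped-in arc stays strictly below \c and positive throughout, making the exchange a strict improvement. The only remaining check is that $m$ stays defined and positive until $w^*$ (or the interval's end), which is immediate since $m$ is a positive increasing half-line.
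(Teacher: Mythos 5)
Your proposal is correct and follows essentially the same route as the paper: an exchange argument that reroutes \c along the shallower half-line $m$ from $v$ until the first time $t^*$ (your $w^*$) at which $m$ meets \c again (or the interval's end if none exists), observing that the substitute is legal at $\eps=0$ because both splice points are coincidence points, and that $0 \leq m(t) < \c(t)$ on the swapped arc gives $\D'$ strictly smaller, contradicting optimality. Your additional justifications---the intermediate-value argument for $m < \c$ up to the first crossing and the explicit check that $m$ stays positive---only make explicit what the paper's proof asserts directly.
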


\begin{proof}
  Assume for contradiction that \c starts to travel along $\ell^{v\to}$ at
  vertex $v$ (see Fig.~\ref{fig:opt_clip_half_lines}). Let $t^*$, with $t^* >
  t_v$, be the first time where \c intersects $m$ again after visiting
  $\ell^{v\to}$, or $\infty$ if no such time exists. Now consider the
  \trajectoid \t, such that $\t(t) = m(t)$ for all times $t \in
  [t_v,t^*]$, and $\t(t) = \c(t)$ for all other times $t$. At any time $t$
  in the interval $(t_v,t^*)$ we have $0 \leq \t(t) < \c(t)$. It follows that
  $\D'(\t) < \D'(\c)$. Contradiction.
  \begin{figure}[htb]
    \centering
    \includegraphics{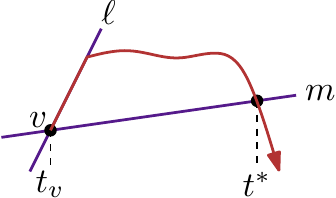}
    \caption{A central trajectory does not visit the half-line $\ell^{v\to}$.}
    \label{fig:opt_clip_half_lines}
  \end{figure}
\end{proof}

\begin{corollary}
  \label{cor:opt_clip_half_lines}
  Let $\ell$ and $m$ be two positive increasing half-lines, of which $\ell$ is
  the steepest, and let $v$ be their intersection point. \c does not visit
  $\ell^{v\to}$, that is, $\ell^{v\to} \cap \c = \emptyset$.
\end{corollary}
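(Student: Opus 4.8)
The plan is to bootstrap this statement from Lemma~\ref{lem:opt_clip_half_lines_at_v} together with the fact that, since $\eps=0$, \c is continuous and may only switch entities at times where two trajectories coincide. Throughout, I read $\ell^{v\to}$ as the points of $\ell$ strictly after $v$. Suppose for contradiction that $\c\cap\ell^{v\to}\neq\emptyset$, and consider a maximal time interval $[a,b]$, with $b>t_v$, on which \c follows $\ell$.

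First I would eliminate the case $a\le t_v$. Then $\c(t_v)=\ell(t_v)=v$ and \c follows $\ell$ on $(t_v,b]$, i.e.\ \c arrives at the vertex $v$ and continues along $\ell^{v\to}$ --- precisely the behaviour ruled out by Lemma~\ref{lem:opt_clip_half_lines_at_v}. Hence we may assume $a>t_v$, so that \c enters $\ell^{v\to}$ at an \emph{interior} point $p=\ell(a)$, switching onto $\ell$ from some other half-line.

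The heart of the argument is an exchange showing that such an interior excursion cannot occur in an optimal \c. Since $a>t_v$ and $\ell$ is steeper than the positive increasing half-line $m$, for every $t\in[a,b]$ we have $0\le m(t)<\ell(t)=\c(t)$; in particular $m$ is defined, positive, and strictly closer to \I than \c on the whole excursion. Let $A$ be the last time at or before $a$ with $\c(A)=m(A)$ and $B$ the first time at or after $b$ with $\c(B)=m(B)$, so that $\c(t)\ge m(t)\ge 0$ on $[A,B]$ with strict inequality on $[a,b]$. Replacing \c by $m$ on $[A,B]$ yields a trajectoid \t with $|\t(t)|=m(t)\le\c(t)=|\c(t)|$ everywhere and $|\t|<|\c|$ on a set of positive measure, so $\D'(\t)<\D'(\c)$, contradicting that \c minimizes $\D'$ (Lemma~\ref{lem:central_ideal}).

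The step I expect to be the main obstacle is guaranteeing that the exchange produces a genuine, jump-free trajectoid, i.e.\ that the crossing times $A$ and $B$ actually exist so that the switch onto $m$ (and back) is continuous, rather than the ``$\c\ge m$'' excursion running into the boundary of the elementary interval. This mirrors the choice of endpoint in the proof of Lemma~\ref{lem:opt_clip_half_lines_at_v}, where the analogous right endpoint is taken to be the first time \c meets $m$ again. I would establish the existence of $A$ and $B$ from the maximality of $[a,b]$ and the continuity of \c, and resolve any case in which the excursion reaches the interval boundary by matching \c at that endpoint exactly as in Lemma~\ref{lem:opt_clip_half_lines_at_v}.
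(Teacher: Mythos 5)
Your reduction of the case $a\le t_v$ to Lemma~\ref{lem:opt_clip_half_lines_at_v} is fine (the paper itself offers no separate proof of the corollary), but the exchange in the case $a>t_v$ has a genuine gap, and it is exactly the one you flag without resolving: the splice times $A$ and $B$ need not exist, and neither the maximality of $[a,b]$ nor the continuity of \c can produce them. Maximality of $[a,b]$ only concerns when \c is on $\ell$; it says nothing about meetings of \c with $m$. Since $\eps=0$, the modified trajectoid \t may change lines only at a point where the two positions coincide, so \t can board $m$ only at an actual intersection of \c with the half-line $m$. But \c can reach the interior point $p$ while staying strictly above $m$ on $m$'s entire positive domain: for instance, \c descends a steep decreasing line from far above \I and switches onto an increasing line $g$ \emph{steeper} than $\ell$ whose half-line never meets the half-line $m$, then rides $g$ up to $p$ and switches onto $\ell$. (Lemma~\ref{lem:opt_clip_half_lines_at_v} does not forbid the switch from $g$ onto $\ell$ at $p$, since there $\ell$ is the \emph{shallower} of the two, so this entry mode is real.) In that scenario no time $A\le a$ with $\c(A)=m(A)$ exists where $m\ge 0$. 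Extending $m$ backwards as a full line to manufacture a crossing does not repair the argument: below \I your key inequality $|\t(t)|=|m(t)|\le|\c(t)|$ fails --- \c may be just above \I while the line through $m$ is far below it --- so the exchange can strictly increase $\D'$. The right endpoint $B$ has the same defect when \c never returns to $m$ within the elementary interval; and note that your fallback ``match \c at the boundary exactly as in the lemma'' is unavailable on the \emph{left} end, because in the lemma the splice onto $m$ happens at $v$, where $\ell(t_v)=m(t_v)$ gives continuity for free, whereas at an interior entry point there is no such coincidence to exploit.

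Concretely, closing the gap seems to require abandoning the single replacement line $m$: instead splice \t to \c at two points lying \emph{on \c itself} and route \t between them along an $x$-monotone path through the arrangement that is pointwise at least as close to \I as \c (strictly closer somewhere), in the spirit of the zone-boundary exchange in Lemma~\ref{lem:opt_in_zone}; alternatively, argue by induction over entry events (e.g.\ over the first time \c enters the union of all clipped pieces), where arrivals along a shallower increasing line are killed by Lemma~\ref{lem:opt_clip_half_lines_at_v} applied at the entry vertex, and arrivals along a steeper line are pushed back to an earlier forbidden visit. Either way, an additional idea beyond your proposal is needed; as written, the case $a>t_v$ does not go through.
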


Consider two positive increasing half-lines $\ell$ and $\ell'$ in \A, of which
$\ell$ is the steepest, and let $v$ be their intersection
point. Corollary~\ref{cor:opt_clip_half_lines} guarantees that \c never uses
the half-line $\ell^{v\to}$. Hence, we can remove it from \A (thus replacing
$\ell$ by a line segment) without affecting \c. By
symmetry, it follows that we can \emph{clip} one half-line from every pair of
half-lines that are both increasing or decreasing. Let $\A'$ be the arrangement
that we obtain this way. See Fig.~\ref{fig:clipped_arrangement} for an
illustration.

\begin{figure}[h]
  \begin{subfigure}
    \centering
    \includegraphics[page=1]{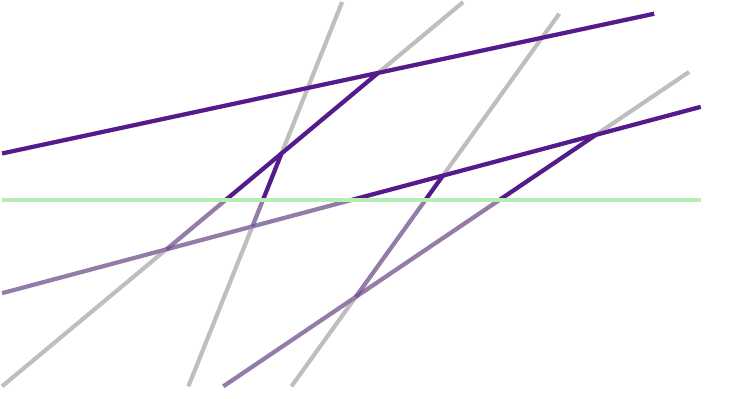}
  \end{subfigure}
  \begin{subfigure}
    \centering
    \includegraphics[page=2]{cut_arrangement}
  \end{subfigure}
  \caption{(a) The clipped increasing half-lines (purple) on top of the original
    half-lines (gray), and (b) the resulting arrangement $\A'$ and its zone (in
    yellow).}
  \label{fig:clipped_arrangement}
\end{figure}

Consider the set \mkmcal{Z} of open faces of $\A'$ intersected by \I, and let
$E$ be the set of edges bounding them. We refer to $\mkmcal{Z} \cup E$ as the
\emph{zone} of \I in $\A'$.

\begin{lemma}
  \label{lem:opt_in_zone}
  \c is contained in the zone of \I in $\A'$.
\end{lemma}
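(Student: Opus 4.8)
The plan is to argue by contradiction with an exchange argument: if \c ever leaves the zone, I push the offending portion of \c inward onto the boundary of the zone, producing a \trajectoid that is strictly closer to \I and therefore has smaller $\D'$, contradicting the optimality of \c.

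First I would record the structure of the zone. Because each original line was split at its crossing with \I, every half-line of $\A'$ lies entirely on one side of \I, so no edge of $\A'$ meets the interior of the $x$-axis. Consequently the faces that \I passes through form a single connected strip straddling \I, whose boundary consists of the lowest positive half-lines (above \I) and the highest negative half-lines (below \I); equivalently, the upper part of $E$ is the lower envelope of the positive half-lines and the lower part is the upper envelope of the negative half-lines. By Corollary~\ref{cor:opt_clip_half_lines}, \c uses only edges of $\A'$, so the statement ``\c leaves the zone'' means that \c travels along a non-innermost edge, i.e.\ (above \I) a positive half-line lying strictly above this lower envelope.

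Suppose then that \c runs above the zone, and let $[a,b]$ be a maximal time interval on which $\c(t)$ stays strictly above the upper zone boundary $z$ (the value of the lower envelope of the positive half-lines). By maximality \c meets the boundary at the endpoints, so $\c(a)=z(a)$ and $\c(b)=z(b)$. I define \t to agree with \c outside $[a,b]$ and to run along the boundary of the zone strip on $[a,b]$. Then \t is continuous at $a$ and $b$, is built from pieces of input trajectories, and satisfies $|\t(t)|\le z(t)<|\c(t)|$ for every $t\in(a,b)$; integrating yields $\D'(\t)<\D'(\c)$. A symmetric argument handles the case where \c dips below the zone, so in either case we contradict the optimality of \c and conclude that \c stays in the zone.

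The main obstacle is verifying that \t is a genuine \trajectoid, i.e.\ that every switch it performs is permitted when $\eps=0$. Along the zone boundary the vertices are of three kinds: an intersection of two positive (or two negative) half-lines, where the two entities coincide and are thus $0$-connected, so by Observation~\ref{obs:jump} the switch is an allowed (type (i)) transition; a point where the boundary meets \I, which is a single entity crossing the $x$-axis, so \t merely follows one continuous entity across \I and makes no switch at all; and it is precisely here that care is needed, since I must guarantee that \t never makes a forbidden type (ii) transition, jumping from an entity above \I to a \emph{different} entity an equal distance below it. I would exclude this by keeping \t on the same side of \I as the envelope it is currently tracing until that envelope actually reaches \I, whereupon the tracing entity itself carries \t across; this keeps \t continuous and switch-free except at true coincidences, which completes the exchange argument.
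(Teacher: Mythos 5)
Your high-level strategy---shortcut \c along the zone boundary and win by pointwise comparison---is exactly the paper's exchange argument, and your explicit attention to when a switch is legal at $\eps=0$ is a point where you are more careful than the paper's own terse proof. But your identification of the zone boundary with the lower envelope of the positive half-lines is wrong, and the argument breaks precisely where you try to patch it. The envelope $z$ descends to \I whenever an entity crosses the $x$-axis, and it jumps there (down when a new positive half-line appears at height $0$, up when one disappears); the boundary of the zone does not behave this way: near such a crossing the incident edge pieces bound faces of the zone on \emph{both} sides, so they are interior to $\mkmcal{Z}\cup E$, and the outer boundary of the zone passes strictly above the crossing and never touches \I. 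Two things go wrong as a consequence. First, ``$\c(t)$ strictly above $z(t)$'' does not imply that \c has left the zone, so your exchange would also ``improve'' trajectoids that are legitimately inside the zone---where no improvement need exist, so the claimed strict inequality is false in general. Second, and fatally, your final fix---letting the tracing entity carry \t across \I when the traced envelope reaches the axis---contradicts the inequality $|\t(t)|\le z(t)$ you assert for all $t\in(a,b)$: once \t is on the opposite side of \I from \c, the positive envelope bounds nothing about $|\t(t)|$. Worse, since with $\eps=0$ switches happen only at coincidences, after crossing \I your \t may sit on a steeply descending entity with no coincidence available and be dragged arbitrarily far from \I while \c remains at bounded height; then $\D'(\t)<\D'(\c)$ simply fails, and the contradiction evaporates.

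The repair is the paper's formulation. At every abscissa the zone contains a neighbourhood of \I (each point of \I lies in a face of $\A'$ or on an edge bounding one), so a continuous \c that leaves the zone stays on one side of it for the whole of a maximal excursion, exiting and re-entering at two points $u,v$ on the \emph{same} (say upper) boundary chain of the zone. That chain is $x$-monotone, never meets \I, and its vertices are intersection points or clip points of positive half-lines---genuine coincidences of entities---so a trajectoid \t can traverse it from $u$ to $v$ with no discontinuity at all, and the type (ii) transition you worried about never arises. This yields $0\le\t(t)\le\c(t)$ on $[t_u,t_v]$, strict on the interior, hence $\D'(\t)<\D'(\c)$. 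In short: your instinct that the danger lies in forbidden jumps across \I was sound, but the correct boundary path dissolves the danger, whereas your envelope-based path manufactures it.
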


\begin{proof}
  We can basically use the argument as in
  Lemma~\ref{lem:opt_clip_half_lines_at_v}. Assume by contradiction that \c
  lies outside the zone from $u$ to $v$. The path from $u$ to $v$ along the
  border of the zone is $x$-monotone, hence there is a \trajectoid \t that
  follows this path during $[t_u,t_v]$ and for which $\t(t) = \c(t)$ at any
  other time. It follows that $\D'(\t) < \D(\c)$ giving us a contradiction.
\end{proof}

\newcommand{\Zl}{\ensuremath{\mkmcal{Z}_\ell}\xspace}
\begin{lemma}
  \label{lem:zonecomplex_Ap}
  The zone \Zl of a line $\ell$ in $\A'$ has maximum complexity $8n$.
\end{lemma}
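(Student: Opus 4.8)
The plan is to exploit the fact that the clipping defining $\A'$ destroys all intersections \emph{within} each monotonicity class, so that $\A'$ decomposes into four families of pairwise non-crossing segments, and then to run a zone-theorem-style charging argument on this restricted structure.

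First I would record the structural consequence of the clipping. Split the $2n$ half-lines of $\A'$ into four classes according to sign (positive/negative) and monotonicity (increasing/decreasing). Two half-lines lying in opposite half-planes meet only on \I and hence never cross transversally, so the only same-class pairs that can cross are those that are both positive or both negative and share the same slope sign. Corollary~\ref{cor:opt_clip_half_lines} and its symmetric variants state that for every such crossing pair we removed the portion of the steeper half-line beyond their intersection point $v$; after this removal the two segments touch at $v$ but no longer cross in their interiors. Since the clip of each half-line is taken at its first (extreme) overtaking point, clipping one pair never reintroduces a crossing with a third half-line. Iterating over all pairs therefore leaves each of the four classes \emph{internally non-crossing}, and every surviving intersection point of $\A'$ occurs between half-lines of different classes.

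Next I would bound the complexity of $\Zl$. Every edge of $\A'$ lies on one of the $n$ original lines, and $\ell$, being a line, meets each of the $2n$ segments at most once, so $\ell$ enters at most $2n+1$ faces. I would charge the edges bounding these faces in the manner of the classical zone theorem~\cite{bepy-htlp-91,pa-cg-95}: to each face assign its left-bounding and right-bounding edges, and charge each newly appearing bounding edge either to a crossing of $\ell$ with an underlying line or to a segment endpoint (an \I-intercept or a clip vertex). The non-crossing property within each class is exactly what prevents a single segment from bounding a long alternating sequence of zone faces, the phenomenon responsible for the $\alpha(n)$ blow-up in general segment arrangements; here each class contributes only a linear, well-behaved chain. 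Together with the facts that there are $n$ underlying lines and only $O(n)$ endpoints (the $2n$ \I-intercepts and at most one clip vertex per half-line), this yields a linear bound, and tracking the constants through the charging gives the stated $8n$.

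The main obstacle I anticipate is precisely the accounting for the segment endpoints. Unlike in a pure line arrangement, the clip vertices and the \I-intercepts turn $\A'$ into an arrangement of segments, where a single zone can in principle be superlinear; the entire force of the argument rests on showing that the within-class non-crossing structure confines the zone's left and right chains enough to rule this out. Verifying that each endpoint and each crossing is charged only a bounded number of times, and that these bounds sum to exactly $8n$ rather than a larger multiple of $n$, is the delicate, calculation-heavy step; the structural reduction established in the first two paragraphs is what makes such a clean constant attainable.
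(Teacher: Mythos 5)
Your structural reduction is the same as the paper's and is sound: after clipping, each of the four sign/monotonicity classes is internally non-crossing, so in each half-plane $\A'$ decomposes into two forests of segments (the paper's purple and brown families). But your proof stops exactly where the lemma's actual content begins. Your second paragraph proposes to ``charge each newly appearing bounding edge either to a crossing of $\ell$ with an underlying line or to a segment endpoint'' and asserts that the within-class non-crossing structure makes this work out to $8n$; your third paragraph then concedes that verifying the bounded charging is ``the delicate, calculation-heavy step.'' That step \emph{is} the lemma. Non-crossing within each class is not by itself enough: the zone of a line in a general segment arrangement is $\Theta(n\alpha(n))$, and the danger here is a single purple segment reappearing on the zone boundary many times, each reappearance cut out by crossings with brown segments; your proposed charge targets ($\ell$-crossings and endpoints, of which there are only $O(n)$) are never shown to absorb these repeated appearances a bounded number of times each, and no mechanism in your sketch rules this out.

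The paper closes this gap with machinery your outline does not contain. It first bounds the zone of $\ell$ in the purple forest alone (at most $p$ left-bounding edges), then inserts the brown segments one at a time while maintaining the invariant that the already-inserted brown segments form a forest whose trees are rooted at unbounded segments. Each newly inserted segment $s$ is subdivided by the current zone into subsegments, of which at most two (the one crossing $\ell$, if any, and the last one) are charged to $s$ itself; every other new left-bounding subsegment either replaces an edge already on the zone or is charged to a purple vertex, and the bound of one charge per purple vertex exploits that clipping produces degree-three T-junctions, only two of whose incident edges can cross $\ell$. This yields $p+2b$ charges, hence $2n$ left-bounding (and symmetrically $2n$ right-bounding) edges per half-plane and $8n$ overall. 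Without an insertion order, an invariant of this kind, and explicit per-target charge bounds, your argument does not establish linearity, let alone the constant $8n$, which you assert rather than derive.
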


\begin{proof}
  We show that the complexity of \Zl restricted to the positive half-plane is
  at most $4n$. A symmetric argument holds for the negative half-plane, thus
  proving the lemma. Since we restrict ourselves to the positive half-plane,
  the half-lines and segments of $\A'$ correspond to two forests: a purple
  forest with $p$ segments\footnote{Note that some of these segments ---the
    segments corresponding to the roots of the trees--- are actually
    half-lines.}, and a brown forest with $b$ segments. Furthermore, we have
  $p+b = n$.

  Rotate all segments such that $\ell$ is horizontal. We now show that the
  number of \emph{left-bounding} edges in \Zl is $2n$. Similarly, the number of
  right-bounding edges is also $2n$. Consider just the purple forest. Clearly,
  there are at most $p$ left-bounding edges in the zone of $\ell$ in the purple
  forest. We now iteratively add the edges of the brown forest in some order
  that maintains the following invariant: the already-inserted brown segments
  form a forest in which every tree is rooted at an unbounded segment (a
  half-line).  We then show that every new left-bounding edge in the zone
  either replaces an old left-bounding edge or can be charged to a purple
  vertex or a brown segment. In total we gather $p + 2b$ charges, giving us a
  total of $2p+2b = 2n$ left bounding edges.

  Let $s$ be a new brown leaf segment that we add to $\A'$, and consider the
  set $J$ of all intersection points of $s$ with edges that form \Zl in the
  arrangement so far. The points in $J$ subdivide $s$ into subsegments
  $s_1,..,s_k$ (See Fig.~\ref{fig:subsegments_in_zone}). All new edges in \Zl
  are subsegments of $s$. We charge the subsegment $s_i$ that intersects $\ell$
  (if any), and $s_k$ to $s$ itself. The remaining subsegments replace either
  a brown edge or a purple vertex from \Zl, or they yield no new left bounding
  edges.
  Clearly, segments
  replacing edges on \Zl do not increase the complexity of \Zl. We charge the
  segments replacing purple vertices to those vertices.
  We claim that a vertex $v$ gets charged at most once.
  Indeed, each vertex has three incident edges, only two of which may intersect $\ell$.
  The vertex gets charged when a brown segment intersects those two edges between $v$ and $\ell$. After this, $v$ is no longer part of \Zl in that face (though it may still be in \Zl in its other faces).
  It follows that the total number of charges is $p + 2b$.
  %
  \begin{figure}[h]
    \centering
    \includegraphics{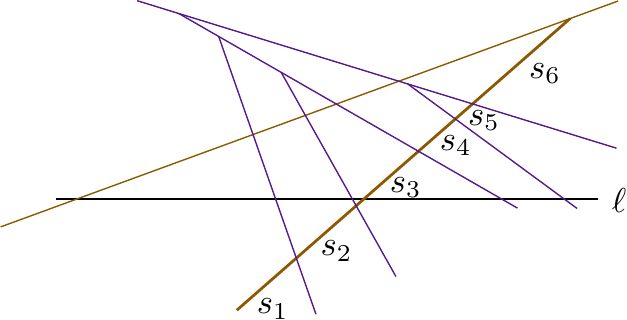}
    \caption{The new segment $s$ (fat) subdivided into subsegments. $s_3$ and
      $s_6$ are charged to $s$, $s_5$ is charged to a purple vertex. Segment
      $s_4$ replaces a purple edge in the zone, and $s_1$ and $s_2$ do not give
      new left bounding edges.}
    \label{fig:subsegments_in_zone}
  \end{figure}
\end{proof}

\begin{theorem}
  \label{thm:complexity_lines}
  Given $n$ lines, a trajectoid \c that minimizes $\D'$ has worst case
  complexity $8n$.
\end{theorem}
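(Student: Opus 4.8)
My plan is to derive the theorem by simply combining the two structural facts already established about \c and the clipped arrangement $\A'$. By Lemma~\ref{lem:opt_in_zone}, a trajectoid \c minimizing $\D'$ is entirely contained in the zone of \I in $\A'$, and by Lemma~\ref{lem:zonecomplex_Ap} the zone of any line in $\A'$ has complexity at most $8n$. Since \I coincides with the $x$-axis on the elementary interval and is therefore a line, Lemma~\ref{lem:zonecomplex_Ap} applies directly to the zone of \I. So the only thing left to argue is the bridge from \emph{``\c lies in a region of complexity $8n$''} to \emph{``\c has complexity $8n$.''}

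For that bridge I would argue as follows. With $\eps = 0$, two entities can coincide in \c only when their trajectories actually cross, so by Observation~\ref{obs:jump} every switch of \c from one entity to another happens at an intersection point, i.e.\ at a vertex of $\A'$. Between two such events \c follows a single entity and is thus a straight segment, so \c is a continuous, time-monotone ($x$-monotone) polygonal curve whose breakpoints are vertices of $\A'$ and whose linear pieces lie on edges of $\A'$. Because \c is contained in the zone, each such linear piece must lie on a zone edge (an edge of the set $E$ bounding the zone faces), and each breakpoint is a zone vertex. Monotonicity in time then guarantees that \c cannot traverse any edge of $\A'$ twice: the sequence of zone edges it visits is a simple monotone path. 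Hence the number of linear pieces of \c is at most the number of zone edges, which is at most $8n$, giving the claimed bound.

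The remaining subtlety---and the part I would state carefully rather than wave through---is why passing from \A to the clipped arrangement $\A'$ is harmless. This is exactly what Corollary~\ref{cor:opt_clip_half_lines} provides: \c never visits any of the clipped half-lines, so deleting them to form $\A'$ changes neither \c nor its complexity, and the zone-containment argument can legitimately be run inside $\A'$ where the $8n$ bound holds. I expect essentially no other obstacle, since the genuinely hard combinatorial work (the $8n$ zone bound via the purple/brown forest charging scheme, and the optimality-based clipping) has already been carried out in the preceding lemmas; the theorem is the assembly step. I would therefore keep the write-up short: invoke Lemmas~\ref{lem:opt_in_zone} and~\ref{lem:zonecomplex_Ap} together with Corollary~\ref{cor:opt_clip_half_lines}, note that \c is a monotone path through zone edges, and conclude that its complexity is bounded by the zone complexity $8n$.
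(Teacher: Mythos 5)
Your proposal is correct and takes essentially the same route as the paper's own proof, which is exactly the assembly step you describe: \c lies in the zone of \I in $\A'$ (Lemma~\ref{lem:opt_in_zone}), so its vertices are vertices of that zone, and by Lemma~\ref{lem:zonecomplex_Ap} there are at most $8n$ of them. Your extra care---explicitly invoking Corollary~\ref{cor:opt_clip_half_lines} to justify working in the clipped arrangement, and the time-monotonicity argument that \c never reuses a zone edge---merely spells out what the paper's three-line proof leaves implicit.
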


\begin{proof}
  \c is contained in the zone of $\A'$. So the intersection vertices of \c are
  vertices in the zone of $\A'$. By Lemma~\ref {lem:zonecomplex_Ap}, the zone has at most $8n$ vertices,
     so \c has at most $8n$ vertices as well.
\end{proof}

\eenplaatje [scale=0.8] {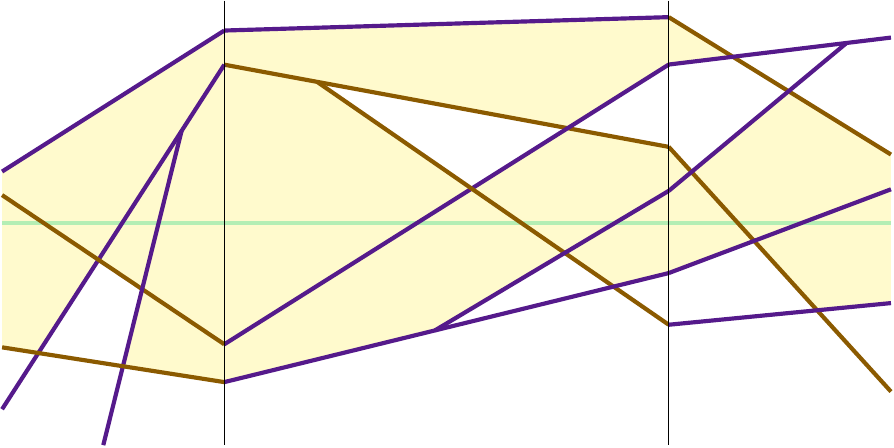} {A piece of the zone in three consecutive slabs.}

Let $\A^*$ be the total arrangement of all restricted functions; that is, a
concatenation of the arrangements $\A'$ restricted to the vertical slabs
defined by their elementary time intervals. Define the \emph {global zone} as the zone of $J$ in $\A^*$. Note that the global zone is more than just the union of the individual zones in the slabs, since cells can be connected along break points (and are not necesarily convex anymore). Nonetheless, we can still show that the complexity of the global zone is linear.

\begin {lemma}
  The global zone has complexity $24\tau n^2 + 48\tau n$.
\end {lemma}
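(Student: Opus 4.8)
The plan is to assemble the global zone from the per-slab zones and then bound the extra complexity that appears only where cells of adjacent slabs merge across a break point of \I. By Lemma~\ref{lem:num_elementary_intervals} the domain $[t_0,t_\tau]$ is partitioned into $\tau(n+2)$ elementary intervals, and inside each such interval, after the shear that turns \I into the $x$-axis, Lemma~\ref{lem:zonecomplex_Ap} (applied with $\ell=\I$) bounds the complexity of the zone of \I in $\A'$ by $8n$. So the disjoint union of the per-slab zones has complexity $8\tau n(n+2)=8\tau n^2+16\tau n$, and the whole task is to control the difference between this disjoint union and the true global zone in $\A^*$.

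The first structural step is to localize that difference. Since every face of $\A^*$ lies inside a single slab, and since within an open slab $\A^*$ coincides with that slab's $\A'$ while the restriction of \I to the slab is a single segment, the global-zone faces in an open slab are exactly the slab-zone faces, and their boundary edges lying strictly inside the slab are exactly the slab-zone edges. Hence no new vertex or edge of the global-zone boundary can occur in the open interior of a slab; all features created by the merging sit on the vertical lines $v=\{t=t^*\}$ separating consecutive elementary intervals. At such a line I would account for two kinds of new features: the points where a zone edge of an incident slab meets $v$, and the maximal subsegments of $v$ that separate a zone face on one side from a non-zone face on the other and therefore become genuine boundary edges of the (now non-convex) global zone. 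Using that each of the $n$ (possibly clipped) entity trajectories of a slab crosses $v$ at most once, and that every such feature is incident to a zone edge of the corresponding slab, I would bound the features that a single slab contributes to one of its two bounding vertical lines by that slab's own zone complexity, i.e.\ by at most $8n$.

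The accounting then produces the factor three. Charging the contribution of each bounding vertical line to the incident slab, every elementary interval is debited at most $8n$ for its open-slab zone plus at most $8n$ for each of its two vertical boundaries, a total of $24n$. Because internal break lines are shared, this over-counts them, but that only inflates an upper bound, and it is precisely this generous charge that matches the stated constant. Multiplying by the $\tau(n+2)$ elementary intervals yields $24\tau n(n+2)=24\tau n^2+48\tau n$, as claimed.

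The main obstacle I anticipate is the non-convexity introduced by the merging: I must verify carefully that gluing the sheared slabs along the break points gives a well-defined region whose boundary gains complexity \emph{only} along the vertical lines, and that a merged cell spawns no boundary vertex away from $v$ (here continuity of the trajectories and of \I is what makes the gluing consistent). The secondary point is that merging shared $v$-segments actually \emph{removes} boundary, so the honest contribution per break line is closer to $n$ than $8n$; bounding it by the full per-slab zone complexity $8n$ is deliberately loose but is exactly what yields the clean constant in the statement.
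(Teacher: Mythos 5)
Your localization step is where the argument breaks. You claim that within an open slab the global-zone faces are exactly the slab-zone faces of \I, so that every new boundary feature created by the merging sits on the vertical break lines $v$. That is false, and it is precisely the difficulty this lemma must overcome. A face of $\A^*$ can intersect \I only in slab $i+1$, yet extend through a gap in the break line into slab $i$, where it occupies a face $F$ of that slab's arrangement $\A'$ that does not meet \I at all; every boundary edge and vertex of $F$ then belongs to the global zone while lying strictly in the open interior of slab $i$. Your accounting counts only crossings of $v$ and subsegments of $v$, so it misses this contribution entirely, and the contribution can be large (up to the full zone complexity of $v$). The continuity of the trajectories, which you invoke to dismiss your anticipated obstacle, is beside the point: the issue is not whether the gluing is well defined, but that merged cells import whole faces of $\A'$ whose boundaries lie away from the break lines.

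The paper's fix is short and is the idea your proposal lacks: any such imported face must touch a bounding vertical line (that is how it connects to the portion of the cell that meets \I, possibly through a chain of slabs), hence it lies in the \emph{zone of that vertical line} in the slab's arrangement $\A'$. Consequently the global zone is contained, slab by slab, in the union of the zones of three lines --- \I and the two bounding vertical lines --- and Lemma~\ref{lem:zonecomplex_Ap} holds for an arbitrary line, giving $8n$ for each, i.e.\ $24n$ per slab and $24\tau n^2 + 48\tau n$ over the $\tau(n+2)$ elementary intervals of Lemma~\ref{lem:num_elementary_intervals}. Your final constants match only because you happened to budget $8n$ per vertical line; to repair the proof, replace your on-the-line feature count by the zone of each bounding vertical line in $\A'$, which is the object that actually absorbs the merged-cell complexity.
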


\begin {proof}
The global zone is a subset of the union of the zones of $J$ and the vertical lines $x=t_i$, for $i \in {0,..,k}$ in the arrangements $\A'$.
By Lemma~\ref {lem:zonecomplex_Ap}, a line intersecting a single slab has zone complexity $8n$.
Each slab is bounded by two vertical lines and intersected by $J$, so applying the lemma three times yields a $24n$ upper bound on the complexity in a single slab.
Since there are $\tau(n+2)$ elementary intervals,
we conclude that the total complexity is at most $24 \tau n^2 + 48\tau n$.
\end {proof}

As before, it follows that \c is in the zone of \I in $\A^*$. Thus, we conclude:

\begin{theorem}
  \label{thm:complexity_opt_central trajectory}
  Given a set of $n$ trajectories in $\R^1$, each with vertices at times
  $t_0,..,t_\tau$, a central trajectory \c with $\eps = 0$, has worst case
  complexity $O(\tau n^2)$.
\end{theorem}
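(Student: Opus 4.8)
The plan is to establish the $O(\tau n^2)$ complexity bound for the continuous case ($\eps = 0$) by combining the per-elementary-interval analysis already developed in the appendix with the global zone machinery. The key structural fact, established in Theorem~\ref{thm:complexity_lines} and Lemma~\ref{lem:opt_in_zone}, is that within a single elementary interval the optimal trajectoid \c is confined to the zone of \I in the clipped arrangement $\A'$, whose complexity is $8n$. First I would invoke Lemma~\ref{lem:num_elementary_intervals} to note that the total number of elementary intervals is $\tau(n+2) = O(\tau n)$. Since \c has complexity at most $8n$ inside each such interval, a naive summation already yields $O(\tau n) \cdot O(n) = O(\tau n^2)$ vertices, which matches the claimed bound.

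**Second**, I would make the argument clean by appealing directly to the global zone lemma just proved: the global zone of $J$ in the concatenated arrangement $\A^*$ has complexity $24\tau n^2 + 48\tau n = O(\tau n^2)$. The point is that \c, being contained in the zone of \I inside each slab (by Lemma~\ref{lem:opt_in_zone}), is globally contained in the zone of \I in $\A^*$, because the local zones chain together exactly along the break points where elementary intervals meet. Every vertex of \c is either an intersection vertex of two entity trajectories lying on the zone boundary, or a break point of \I itself; in both cases it is a vertex of the global zone. Therefore the complexity of \c is bounded by the complexity of the global zone, giving $O(\tau n^2)$ directly.

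**The one subtlety** I would be careful about — and which is the main obstacle — is justifying that the local confinement results glue into a coherent global statement despite the cells of the global zone being non-convex and connected across slab boundaries, as the preceding lemma explicitly warns. The clipping operation that produces $\A'$ is performed independently in each slab, so I must verify that removing a clipped half-line in one slab does not inadvertently remove a portion of \c that \c uses in an adjacent slab. This is handled by Corollary~\ref{cor:opt_clip_half_lines}, which guarantees \c never visits the clipped half-lines \emph{anywhere}, so the clipping is globally safe and the per-slab zones assemble without loss. Once this consistency is noted, the conclusion follows immediately: \c lies in the global zone of \I in $\A^*$, the global zone has complexity $O(\tau n^2)$, and hence \c does too.
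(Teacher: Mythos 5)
Your proposal is correct and follows essentially the same route as the paper: confinement of \c to the zone of \I in the clipped arrangement $\A'$ per elementary interval (Lemma~\ref{lem:opt_in_zone}, Theorem~\ref{thm:complexity_lines}), assembled globally via the global zone of $\A^*$ with its $24\tau n^2 + 48\tau n$ bound, with Corollary~\ref{cor:opt_clip_half_lines} ensuring the clipping is harmless --- exactly the paper's concluding argument. Your first-paragraph alternative (summing the per-interval $8n$ bound over the $\tau(n+2)$ elementary intervals of Lemma~\ref{lem:num_elementary_intervals}) is also how the paper itself frames the result at the start of the appendix, so it does not constitute a different approach.
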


\subsection{Algorithm}
\label{sub:Algorithm}

We now present an algorithm to compute a \trajectoid \c minimizing $\D'$. It
follows from Lemma~\ref{lem:central_ideal} that such a trajectoid is a central
trajectory. The basic idea is to construct a weighted graph that represents a
set of \trajectoid{}s, and is known to contain an optimal \trajectoid. We then
find \c by computing a minimum weight path in this graph.

The graph that we use is simply a weighted version of the global zone of \I. We
augment each edge $e=(u,v)$ in the global zone with a weight $\int_{t_u}^{t_v}
|e(t)|\dd t$. Hence, we obtain a weighted graph \G. Finally, we add one source
vertex that we connect to the vertices at time $t_0$ with an edge of weight
zero, and one target vertex that we connect with all vertices at time
$t_\tau$. This graph represents a set of \trajectoid{}s, and contains an
optimal \trajectoid \c.

We find \c by computing a minimum weight path from the source to the target
vertex. All vertices except the source and target vertex have constant
degree. Furthermore, all zones have linear complexity. It follows that \G has
$O(\tau n^2)$ vertices and edges, and thus, if we have \G, we can compute \c in
$O(\tau n^2 \log (\tau n))$ time.

We compute \G by computing the zone(s) of the arrangement $\A'$ in each
elementary interval. We can find the zone of $\A'$ in $O((n+k)\alpha(n+k)\log
n) = O(n\alpha(n)\log n)$ expected time, where $k$ is the complexity of \c and
$\alpha$ is the inverse Ackermann function, using the algorithm of
Har-Peled~\cite{harpeled2000walk}. Since $\A'$ has a special shape, we can
improve on this slightly as follows.

We use a sweep line algorithm which sweeps $\A'$ with a vertical line from left
to right. We describe only computing the upper border of the zone (the part
that lies above \I). Computing the lower border is analogous. So in the
following we consider only positive half-lines.

We maintain two balanced binary search trees as status structures. One storing
all increasing half-lines, and one storing all decreasing half-lines. The
binary search trees are ordered on increasing $y$-coordinate where the
half-lines intersect the sweep line. We use a priority queue to store all
events. We distinguish the following events: (i) a half-line starts or stops at
\I, (ii) an increasing (decreasing) half-line stops (starts) because it
intersects an other increasing (decreasing) half-line, and (iii) we encounter
an intersection vertex between an increasing half-line and a decreasing
half-line that lies in the zone. In total there are $O(n)$ events.

The events of type (ii) involve only neighboring lines in the status structure,
and the events of type (iii) involve the lowest increasing (decreasing)
half-line and the decreasing (increasing) half-lines that are in the zone when
they are intersected by the sweep line. To maintain the status structures and
compute new events we need constantly many queries and updates to our status
structures and event queue. Hence, each event can be handled in $O(\log n)$
time.

The events of type (i) are known initially. The first events of type (ii) and
(iii) can be computed in $O(\log n)$ time per event (by inserting the
half-lines in the status structures). So, initializing our status structures
and event queue takes $O(n \log n)$ time. During the sweep we handle $O(n)$
events, each taking $O(\log n)$ time. Therefore, we can compute the zone of
$\A'$ in $O(n \log n)$ time in total.

\paragraph{Computing a minimum weight path} We can slightly improve the running
time by reducing the time required to compute a minimum weight path. If,
instead of a general graph $\G=(V,E)$ we have a directed acyclic graph (DAG),
we can compute am minimum weight path in only $O(|V|+|E|) = O(\tau n^2)$ time using
dynamic programming. We transform \G into a DAG by orienting all edges $e=(u,v)$, with $t_u <
t_v$, from $u$ to $v$.


The running time is now dominated by constructing the graph. We conclude:

\begin{theorem}
  \label{thm:central trajectory_1d_no_jumps}
  Given a set of $n$ trajectories in $\R^1$, each with vertices at times
  $t_0,..,t_\tau$, we can compute a central trajectory \c for $\eps=0$ in
  $O(\tau n^2 \log n)$ time using $O(\tau n^2)$ space.
\end{theorem}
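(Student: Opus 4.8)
The plan is to assemble the structural results already established into a shortest-path computation. By Lemma~\ref{lem:central_ideal}, a central trajectory with $\eps=0$ is precisely a trajectoid minimizing $\D'(\t)=\int_{t_0}^{t_\tau}|\t(t)|\dd t$, so it suffices to compute a minimum-cost trajectoid. By Theorem~\ref{thm:complexity_opt_central trajectory}, every such optimal trajectoid is contained in the global zone of \I in $\A^*$. I would therefore work with the weighted graph \G formed from the global zone by giving each edge $e=(u,v)$ the weight $\int_{t_u}^{t_v}|e(t)|\dd t$ and adding a zero-weight source joined to all vertices at time $t_0$ and a sink joined to all vertices at time $t_\tau$. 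The claim is that \G represents a family of trajectoids containing an optimal one, so a minimum-weight source-to-sink path gives a central trajectory.

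First I would check correctness in two parts. The edge weights encode $\D'$ additively: for a path through consecutive zone edges, the traversed edges lie on the modified input trajectories, and splitting $\int|\c(t)|\dd t$ over the corresponding time subintervals reproduces exactly the sum of edge weights, so minimizing the path weight minimizes $\D'$. Conversely, every $x$-monotone source-to-sink path in \G is a legal continuous trajectoid for $\eps=0$, since consecutive global-zone edges meet either at an arrangement vertex or at a slab boundary, which are exactly the places where \I can change direction. Containment of an optimal trajectoid then follows from Theorem~\ref{thm:complexity_opt_central trajectory}.

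For the resource bounds I would invoke the global-zone complexity bound of $O(\tau n^2)$, giving $|\G|=O(\tau n^2)$. Orienting each edge $(u,v)$ from the earlier to the later time turns \G into a DAG, so a minimum-weight path is computed by dynamic programming in $O(|V|+|E|)=O(\tau n^2)$ time and space. To build \G I would run, per elementary interval, the sweep-line routine that computes the zone of \I in the clipped arrangement $\A'$; it handles $O(n)$ events in $O(\log n)$ each, so $O(n\log n)$ per interval. Summing over the $\tau(n+2)$ elementary intervals from Lemma~\ref{lem:num_elementary_intervals} yields $O(\tau n^2\log n)$ construction time, which dominates the path computation.

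The main obstacle is the correctness argument, not the timing bookkeeping. One must ensure that the global zone---whose cells, as noted, connect across slab boundaries and need not be convex---still contains every optimal trajectoid, and that stitching per-slab zones along break points neither loses a cheaper continuous path nor admits an illegal one. The cleanest route is to reuse the shortcutting argument of Lemma~\ref{lem:opt_in_zone}: any candidate path leaving the zone between two times can be replaced by the $x$-monotone portion of the zone boundary between those times without increasing $\D'$, so restricting attention to \G is without loss of optimality.
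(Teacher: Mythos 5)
Your proposal is correct and follows essentially the same route as the paper's own proof: the identical weighted graph on the global zone of \I (edge weights $\int_{t_u}^{t_v}|e(t)|\dd t$ plus zero-weight source and sink), the same DAG orientation with dynamic programming for the minimum-weight path in $O(\tau n^2)$ time, and the same per-elementary-interval $O(n\log n)$ sweep-line construction of the zone of $\A'$, summed over the $\tau(n+2)$ elementary intervals to obtain the dominating $O(\tau n^2\log n)$ term. Your closing appeal to the shortcutting argument of Lemma~\ref{lem:opt_in_zone} for containment in the (non-convex, cross-slab) global zone is exactly the justification the paper relies on.
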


\end{document}